\DeclareCiteCommand{\cite}[\mkbibparens]  
  {\usebibmacro{prenote}}                  
  {\printtext[bibhyperref]{
     \printnames{labelname}
     \setunit{\nameyeardelim}
     \printfield{labelyear}
     \printfield{extrayear}}}                
  {\multicitedelim}                        
  {\usebibmacro{postnote}}
\author[1]{ }
\author[2]{ }
\date{}
\title{Efficient Dynamic Rank Aggregation}
\theoremstyle{thmstyleone}%
\newtheorem{theorem}{Theorem}
\theoremstyle{thmstyleone}
\newtheorem{example}{Example}%
\newtheorem{obs}{Observation}
\newtheorem{lemma}{Lemma}
\newtheorem{proposition}{Proposition}
\theoremstyle{thmstylethree}%
\newtheorem{definition}{Definition}%
\newcommand\floor[1]{\lfloor#1\rfloor}
\newenvironment{proofof}[1]{%
  \begin{proof}[Proof of \autoref{#1}]%
}{%
  \end{proof}%
}
\crefname{section}{Section}{Sections}
\crefname{figure}{Figure}{Figures}
\crefname{table}{Table}{Tables}
\crefname{equation}{Equation}{Equations}
\crefname{example}{Example}{Examples}
\crefname{algorithm}{Algorithm}{Algorithms}
\crefname{theorem}{Theorem}{Theorems}
\crefname{lemma}{Lemma}{Lemmas}
\crefname{remark}{Remark}{Remarks}
\crefname{definition}{Definition}{Definitions}
\crefname{obs}{Observation}{observations }
\title{Efficient Dynamic Rank Aggregation}
\begin{document}
\maketitle




\begin{center}
\vspace*{-2.5cm}

\vspace{1.5cm}

Morteza Alimi $^{a}$(morteza.alimi@uni-a.de), Hourie Mehrabiun $^{b}$ (hourie.mehrabiun@sharif.edu),\\ Alireza Zarei $^{b}$ (zarei@sharif.edu)\\ 

\hspace{10pt}

\begin{flushleft}
\small  
$^a$ Department of Computer Science, University of
Augsburg, Germany. \\
$^b$ Department of Mathematical Sciences, Sharif University of Technology,
Tehran, Iran. \\

\end{flushleft}        
\end{center}

\noindent\textbf{Abstract}

The rank aggregation problem, which has many real-world applications, refers to the process of combining multiple input rankings into a single aggregated ranking. In dynamic settings, where new rankings arrive over time, efficiently updating the aggregated ranking is essential. This paper develops a fast, theoretically and practically efficient dynamic rank aggregation algorithm.

First, we develop the LR-Aggregation algorithm, built on top of the LR-tree data structure, which is itself modeled on the LR-distance, a novel and equivalent take on the classical Spearman's footrule distance. We then analyze the theoretical efficiency of the Pick-A-Perm algorithm, and show how it can be combined with the LR-aggregation algorithm using another data structure that we develop. 

We demonstrate through experimental evaluations that LR-Aggregation produces close to optimal solutions in practice. We show that Pick-A-Perm has a theoretical worst case approximation guarantee of 2. We also show that both the LR-Aggregation and Pick-A-Perm algorithms, as well as the methodology for combining them can be run in $O(n \log n)$ time. 

To the best of our knowledge, this is the first fast, near linear time rank aggregation algorithm in the dynamic setting, having both a theoretical approximation guarantee, and excellent practical performance (much better than the theoretical guarantee). 

\noindent\textbf{keywords} Dynamic Rank Aggregation, Optimal Footrule Aggregation, Approximation Algorithm, Streaming Algorithm


\section{Introduction}
 The rank aggregation problem aims to derive  a consensus ranking by combining multiple input rankings while maximizing consistency according to specific criteria.  Specifically, it combines rankings from various sources into a single aggregated ranking.  For example, ranking   products, hotels, restaurants based on user feedback is a problem in this area.   
  This problem has been widely studied under different names, including the social choice problem~\cite{kemeny2}, consensus ranking~\cite{emond}, preference aggregation~\cite{davenpor}, and Kemeny ranking~\cite{conitzer}.

We first formalize the notation and definition of the rank aggregation problem.

\begin{definition}\label{def:rank}
Let $U = \{e_1, \ldots, e_n\}$ be a universe of $n$ elements. A ranking $\pi$ over $U$ is  a permutation 
 $\pi= \prec e^{\pi}_{1},e^{\pi}_{2},\ldots,e^{\pi}_{n}\succ$, where each  $e^{\pi}_{i}\in U$. 
 Let $U_\pi$ denote the set of elements appearing in $\pi$.  
 The rank ${\pi}(e)$ of an element  $e=e^{\pi}_{i}\in \pi$  is $i$.
 We assume that $e^{\pi}_i$ is ranked higher than $e^{\pi}_j$ in $\pi$
if $i < j$.
  We typically refer to the set $\Pi=[{\pi}_1,\ldots,{\pi}_m]$ of rankings received thus far as the
  \textit{rank aggregation domain}, and denote the \textit{aggregated ranking}
  by $\omega_{\Pi}$.
\end{definition}
Throughout this paper, we  consider only complete rankings,  
meaning rankings $\pi$ for which $U_\pi = U$.  
We hereafter refer to them  simply as rankings.

Various rank aggregation methods have been developed, primarily employing either score-based or distance-based measures.
A well-known  score-based method is the Borda method~\cite{borda}, which assigns scores to elements based on their positions in each ranking.
The aggregated ranking is determined  by total Borda scores.
In distance-based methods, an aggregated ranking is evaluated by its proximity to the input rankings using a distance measure. 
A widely used  distance measure is Spearman's footrule distance~\cite{caid}, defined as follows. 
\begin{equation}\label{eq:spear}
\mathcal{F}(\pi,\pi^\prime)=\sum_{e\in U}|\pi(e)-\pi^\prime(e)|
\end{equation}
where $\pi$ and $\pi^\prime$ are two rankings  over  the same universe $U$ of elements.
   This distance can be computed in linear time~\cite{dwork1}.

 Another distance measure is 
the Kendall-tau distance, which  counts  the number of discordant pairs, where two elements appear in different relative orders across two rankings, defined as:
\begin{align}
\mathcal{K}(\pi,\pi^\prime)=|&\{(e_i,e_j)\quad s.t \quad \pi(e_i) < \pi(e_j) \wedge \pi^\prime(e_i)>\pi^\prime(e_j) \}|
\end{align} 
The Kendall-tau distance between two rankings of size $n$ can be computed in $O(n \log n)$ time~\cite{dwork1}. The 
following relationship has been proved for any pair of  rankings $\pi$ and $\pi^\prime$~\cite{diaconis1}:
  \begin{equation}\label{eq:spearkemeny}
  \mathcal{K}(\pi,\pi^\prime)\leq \mathcal{F}(\pi,\pi^\prime)\leq 2\mathcal{K}(\pi,\pi^\prime)
  \end{equation}
These distances  can  be extended to measure the distance between a ranking  $\pi$ and a list of  rankings $\Pi=[\pi_1\ldots,\pi_m]$  as~follows \footnote{In some references, the \text{normalized} (average) distances are used instead.}:%
\begin{equation}\label{eq:gspear}
  \mathcal{F}(\pi,\Pi)=\sum_{i=1}^{m} \mathcal{F}(\pi,\pi_i)
  \end{equation}
  \begin{equation}\label{eq:gkemeny}
   \mathcal{K}(\pi,\Pi)=\sum_{i=1}^{m} \mathcal{K}(\pi,\pi_i)
 \end{equation}
 The rank aggregation problem has found wide applications across several fields, including social sciences~\cite{arrow,teng,subbian}, bioinformatics~\cite{li,pihur,wang,wang2}, recommendation systems~\cite{oliveira,baltrunas,sohail,balchanowski}, multi-criteria decision making~\cite{mohammadi,dinu}, information retrieval~\cite{farah}, and web search engines~\cite{dwork1,desarkar}.

\subsection{Related work: Static settings}
The well-known problem of rank aggregation has a long history, dating back to the 18th century, when it was studied in the context of social choices~\cite{borda,condorcet}. Several algorithms have been developed to address rank aggregation, using  either score-based or distance-based methods.
The Borda method~\cite{borda} is a widely used score-based rank aggregation technique, where  elements or candidates receive scores based on their ranks in the various input lists. The final aggregated ranking is obtained by sorting the items by their total scores.
In distance-based methods, the rank aggregation problem can be defined as an optimization problem. Given a list of rankings $\Pi$, the rank aggregation problem seeks  an aggregated ranking $\omega$ that minimizes  a distance measure between $\omega$ and  $\Pi$. Well-known distance measures for rank aggregation include   Kendall-tau distance and Spearman's footrule distance. Aggregations obtained by minimizing the footrule distance  as defined  in  \cref{eq:gspear} and Kendall distance  as defined in  \cref{eq:gkemeny} are  called \textit{footrule optimal aggregation} and \textit{Kemeny optimal ranking}, respectively. Bartholdi {\em et al.} proved that finding the Kemeny optimal ranking is NP-hard for complete rankings~\cite{bartholdi}. Dwork {\em et al.}~\cite{dwork1} showed that computing the Kemeny optimal aggregation remains NP-hard even when the number of rankings equals 4. This motivates extensive research into finding an approximate Kemeny optimal aggregation. A well-known algorithm for this is Pick-A-Perm that randomly selects one of the input rankings and is an expected 2-approximation~\cite{ailon}. Dwork {\em et al.} proposed a polynomial time algorithm to compute optimal footrule aggregation~\cite{dwork1}. Based on  \cref{eq:spearkemeny}, optimal footrule aggregation is a $2$-approximation solution for Kemeny optimal aggregation. Coppersmith {\em et al.} proposed the INCR-INDEG algorithm that orders elements based on their weighted in-degrees in a tournament setting~\cite{copper}. This algorithm corresponds closely to Borda’s method, offering a 5-approximation for the Kemeny optimal ranking. There is an experimental comparison of various algorithms for the Kemeny rank aggregation problem by Ali and Meil{\u{a}} in~\cite{ali}.

Several heuristic methods have also emerged that explore rank aggregation from different perspectives.  
Xiao {\em et al.} proposed graph-based methods where rank aggregation is modeled using a graph to represent relationships between rankings~\cite{xiao}. 
Zhang {\em et al.} introduced a graph-based approach for aggregating university rankings, defining a competition graph where nodes represent universities and directed edges indicate outranking relations~\cite{zhang}.
A hierarchical approach has been proposed by Ding {\em et al.} where the aggregation process is broken down into smaller, hierarchical subproblems~\cite{ding}. Schwarz introduced the center ranking aggregation, which aims  to minimize the maximum Kendall-tau distance between the aggregated ranking and all input rankings~\cite{schwarz}. He used the tree search algorithm which is based on a branch-and-bound search to solve this problem. The research by Wei {\em et al.} introduces rank aggregation with proportionate fairness (p-fairness)~\cite{wei}. Their method aims to ensure that the aggregation of items associated with a protected attribute, such as gender, is proportional in every position of the aggregated ranking.  
A popular approach to rank aggregation is based on probabilistic models on permutations, such as the Mallows model~\cite{mallows} and the Luce model~\cite{luce,plackett}. Several research studies in this context include~\cite{qin,lim,zhu,alvin}. Young interpreted Kemeny optimal ranking as a maximum likelihood estimator (MLE), particularly useful in epistemic social choice~\cite{young}. Additionally, a comprehensive survey on rank aggregation, provided in~\cite{wang3}, discusses methodological trends, highlights the strengths and limitations of existing approaches, and identifies key open challenges in the field.

\subsection{Related work: Dynamic settings}
In many real-world scenarios, the aggregated ranking must be updated dynamically as rankings arrive in a stream. The goal is to efficiently maintain the updated aggregation  without recomputing it from scratch. 
Offline algorithms are often impractical due to their high time complexity.  
For example, the polynomial-time algorithm for optimal footrule aggregation~\cite{dwork1}  
relies on computing a minimum perfect matching in a weighted bipartite graph, 
which takes $O(n^{2.5})$ time and is hence
not suitable for the dynamic setting due to its high 
running time. 
Chin~\textit{et al.}~\cite{chin} highlighted the importance of dynamic rank aggregation 
and discussed related challenges but proposed no specific algorithms in this regard. 
In contrast to static settings, few studies address streaming rank aggregation; the studies by Yasutake {\em et al.}~\cite{yasutake} and Irurozki {\em et al.}~\cite{irurozki} (which uses the Mallows model (MM)~\cite{mallows}) are among them. However, the approaches of these papers differ significantly
from ours in terms of input assumptions and optimization goals.  In~\cite{yasutake}, the authors considered a dynamic framework where rankings are received in an infinite stream, a scenario commonly encountered in online learning contexts.
  At each time $t$, a learner 
predicts a permutation $\hat{\sigma}_t\in S_n$, and the adversary provides a true permutation $\sigma_t\in S_n$ to the 
learner. The learner's objective is to minimize  the cumulative loss bound $\sum_{t=1}^{T}\mathcal{K}(\sigma_t,\hat{\sigma}_t)$. 
~\cite{irurozki} addressed  the rank aggregation problem  in  stream learning, assuming rankings are independent and identically distributed (i.i.d) according to the Mallows model (MM)~\cite{mallows}. They considered learning from non-stationary ranking streams,  
where the distribution of preferences changes over time.  
\subsection{Our Contribution} 

We propose a dynamic rank aggregation algorithm with
$O(n \log n)$ update time, presented in three stages.
In~\cref{sec:lr-rank aggregation},  we develop the {\em LR-Aggregation} algorithm, 
which is based on a new definition for Spearman's footrule distance~(\cref{eq:spear}).  As the presented formula for the Spearman’s footrule distance is not well-suited for dynamic settings, we introduce a new distance measure, the \textit{LR-distance}, which we prove to be equivalent to the Spearman’s footrule
distance. This new definition inspired the development of an algorithm for aggregation in dynamic contexts.
The algorithm works by partitioning the domain into two halves,  and then recursing on each half. 
Decisions are made based on the number of occurrences of items in various parts of the input domain, 
which we store in the \textit{LR-tree} data structure. 

The current offline rank aggregation algorithm, which relies on computing a minimum perfect matching, requires $O(n^2.5)$ time to produce the optimal footrule aggregated ranking each time a new ranking is received. In comparison, our approach is significantly more efficient. It updates the aggregated ranking each time a new ranking is received in
just $O(n \log n)$ time and requires only $O(n^2)$ storage. Notably, both the time complexity and storage requirements are independent of the number of rankings, $m$, which may be arbitrarily greater than $n$.

Our experimental evaluations have demonstrated its effectiveness in closely approximating the optimal solution. The practical superiority (small deviation from the optimal solution) and theoretical advantages  (lower time and space complexity)  of our algorithm distinguish it from the best existing  methods, making it suitable for real-world dynamic rank aggregation where timely updates are critical. 

In~\cref{sec:pap}, we show how the Pick-A-Perm algorithm, 
which randomly selects an  input ranking, can be adapted to the dynamic setting. 
Pick-A-Perm has been proven to be an expected 2-approximation for the Kemeny optimal ranking~\cite{ailon}.  
We provide an analysis showing that Pick-A-Perm  
yields an expected 2-approximation under Spearman’s footrule distance. We then show how it can be implemented in the dynamic setting with the same time and space complexity as the LR-Aggregation algorithm, 
by utilizing reservoir sampling.

Our final algorithm, the \textit{Dynamic Rank Aggregation} algorithm, 
runs both the above-mentioned  algorithms and returns the better result.
This algorithm combines the strengths of both previous algorithms 
for efficient computation of optimal footrule aggregation in dynamic setting in $O(n\log n)$ time. 
Having Pick-A-Perm as a subroutine means it is an expected 2-approximation, but 
as our experimental evaluations show, the approximation factor in practice is very close to 1. 

In ~\cref{sec:experiment}, 
we show the practical efficiency of our algorithm by running it on synthetic datasets, 
generated under  uniform, biased, and Mallows model distributions, as well as on a real-world dataset. 
Our experimental setup includes comparing the LR-Aggregation with the optimal footrule aggregation and Pick-A-Perm. 
These experimental results demonstrate that our proposed algorithm 
not only produces close-to-optimal solutions, but also consistently outperforms Pick-A-Perm, 
which justifies the need for the LR-Aggregation algorithm.  In \cref{sec:conclusion}, we discuss the results, present the conclusion, and offer remarks on possible future work.

\section{LR-Aggregation}\label{sec:lr-rank aggregation}
In this section, we develop a fast, space-efficient algorithm for dynamically updating the aggregated ranking.
The main idea is that, for each element $e$, we keep the count of the number of times $e$ appears in total to the left
of a specific position. This information is updated  upon arrival of a new ranking. 
We then use a recursive procedure.  At the top level, we partition the elements based on the number of 
 times they appear in the left part of the domain rankings, and then we recurse on each part. 
We refer to this procedure as \textit{LR-Aggregation}.

  The following definitions formalize this method.
 \begin{definition}
 \label{def:rankinterval}
Given ranking $\pi=\prec e_1,\ldots e_n\succ$, a subranking $\sigma=\prec e_i,e_{i+1},\ldots,e_j\succ$, (where 
$1\leq i \leq j\leq n$) is defined as  a consecutive subsequence of $\pi$. The rank interval  $I(\sigma)$ of a subranking $\sigma$ is 
defined as  the range of  its elements' ranks in $\pi$. In particular,  subrankings of $\pi$  with rank intervals	$[1:
\floor{(n+1)/2}]$ and $[\floor{(n+1)/2}+1:n]$ are  called the left and right subrankings, denoted by $\pi_\ell$ and $\pi_r$,
respectively.

For a subranking $\sigma$ of a ranking $\pi = \prec e_1, \ldots, e_n \succ$ with rank interval $I(\sigma) = [a : b]$, we define its left-extended interval (with respect to $\pi$) as ${I^{\prec_\pi}}(\sigma) = [1 : b]$ and its right-extended interval (with respect to $\pi$) as ${I^{\succ_\pi}} (\sigma)= [a : n]$. The corresponding extended subrankings are denoted by ${\sigma^{\prec_\pi}}$ and ${\sigma^{\succ_\pi}}$, which are the subrankings of $\pi$ restricted to ${I^{\prec_\pi}}(\sigma)$ and ${I^{\succ_\pi}} (\sigma)$, respectively.
\end{definition}
\begin{example}
Consider the ranking $\pi = \prec A, B, C, D, E, F, G, H \succ$.  
 $\sigma = \prec D, E, F \succ$ is a subranking of $\pi$ with 
rank interval $I(\sigma) = [4 : 6]$.  We have:
\begin{align*}
&  I^{\prec_\pi}(\sigma) = [1 : 6],  
&&\sigma^{\prec_\pi} = \prec A, B, C, D, E, F \succ \quad   \\
&I^{\succ_\pi}(\sigma) = [4 : 8], 
&&\sigma^{\succ_\pi} = \prec D, E, F, G, H \succ \quad   
\end{align*}
\end{example}
\begin{definition}
\label{def:presence}
 Given a ranking $\pi$ and an interval $I$, we define the presence of an element $e\in U_\pi$ with respect to $\pi$ and $I$, denoted by   $p(e)^I_\pi$,  as follows:
 \begin{align}
p(e)^I_\pi=&\begin{cases} 
  1,& \text{if}\quad \pi(e)\in I,\\
  0,& \text{otherwise.}
\end{cases}
\end{align}
This definition is generalized to a list of rankings $\Pi$ as follows:
\begin{align}
p(e)^I_{\Pi}=\sum_{\pi\in \Pi}p(e)^I_{\pi}.
\end{align}
If $I$ refers to the entire domain, we simply write $p(e)_{\pi}$ and $p(e)_{\Pi}$.
 \end{definition}
In particular,  the \textit{left presence} and \textit{right presence} of an element $e$ with respect to the ranking $\pi$, denoted by
  $lp(e)_\pi$ and $rp(e)_\pi$\ are defined as:
\begin{equation}
  lp(e)_{\pi}=p(e)_{\pi_\ell}
    \label{rel8}
  \end{equation} 
 \begin{equation}
 rp(e)_{\pi}=p(e)_{\pi_r}
    \label{rel9}
  \end{equation}
The above definitions are also extended for the entire rank aggregation domain $\Pi$, as: 
\begin{equation}
  lp(e)_{\Pi}=\sum_{\pi\in \Pi}lp(e)_{\pi}
    \label{rel7}
  \end{equation} 
 \begin{equation}
 rp(e)_{\Pi}=\sum_{\pi\in \Pi}rp(e)_{\pi}
    \label{rel10}
  \end{equation}
  Now we introduce \textit{Left-Right distance} or \textit{LR-distance} to compute the distance 
between  two rankings. Informally, the LR-distance between two rankings $\pi$ and $\sigma$ is equal to the number of elements in $\pi_\ell$ 
which are not in $\sigma_\ell$, plus the number of elements in $\pi_r$ which are not in $\sigma_r$, plus the distance of $\pi_\ell$ and $\sigma$ and the 
distance of $\pi_r$ and $\sigma$ to be computed recursively. The formal definition follows.
\begin{definition}
Given two complete rankings $\omega$ and $\sigma$ over the same universe $U$ of elements, the LR-distance between a subranking $\pi$ of 
  $\omega$ and ranking $\sigma$ is defined as: 
\begin{equation}\label{def:lr-measure}
\resizebox{\textwidth}{!}{$
\mathcal{LR}(\pi, \sigma) =
\begin{cases}
\sum\limits_{e \in \pi_\ell^{\prec_\omega}} (p(e)_\sigma^{I(\pi_r^{\succ_\omega})}) +
\mathcal{LR}(\pi_\ell, \sigma) +
\sum\limits_{e \in \pi_r^{\succ_\omega}} (p(e)_\sigma^{I(\pi_\ell^{\prec_\omega})}) +
\mathcal{LR}(\pi_r, \sigma), & \text{if $|U_\pi| > 1$},\\
0, & \text{if $|U_\pi| \le 1$}.
\end{cases}
$}
\end{equation}
\end{definition} 
In all recursive calls, $\pi$ is treated as a subranking of the first argument, and the second argument is fixed. Note that if $\pi$ is a complete ranking (i.e., $\pi=\omega$),  the left and right extension notations are considered with respect to the 
complete ranking $\omega$ in all recursive calls of the above definition, and for brevity, we use ${\pi_\ell}^{\prec}$ and 
${\pi_r}^{\succ}$ instead of ${\pi_\ell}^{\prec_\omega}$ and ${\pi_r}^{\succ_\omega}$, respectively.

The above definition is generalized to define the LR-distance between  a ranking $\pi$ and  a list $\Pi=[\pi_1,\pi_2,\ldots,\pi_m]$ of  rankings 
as follows:
\begin{align}\label{def:glr-measure}
\mathcal{LR}(\pi,\Pi)=\sum_{\pi_i\in \Pi}\mathcal{LR}(\pi,\pi_i).
\end{align}
\begin{example}\label{ex:lr}
Consider the rankings $\pi=\prec A, B, C, D, E, F, G, H\succ$ and $\sigma=\prec B, G, A, E, C, F, H, D\succ$. The LR-distance $\mathcal{LR}(\pi,\sigma)$ is computed as follows:
\begin{align*}
\mathcal{LR}(\pi,\sigma)&=2+[\mathcal{LR}(\prec A, B, C, D\succ ,\sigma)]+2+[\mathcal{LR}(\prec E, F, G, H\succ ,\sigma)]\\
&=2+[1+[\mathcal{LR}(\prec A, B \succ ,\sigma)]+1+[\mathcal{LR}(\prec  C, D\succ ,\sigma)]]\\
&+2+[1+(\mathcal{LR}(\prec  E, F\succ ,\sigma)]+1+[\mathcal{LR}(\prec G, H\succ ,\sigma)]]\\
&=2+[1+[1+1]+1+[1+1]]+2+[1+[1+1]+1+[1+1]]=16.
\end{align*}
Here, the first term $2$  represents $\displaystyle {\sum\limits_{e\in {\pi_\ell}^\prec}}(p(e)_\sigma^{I({\pi_r}^{\succ})})$
in the LR-distance relation, and it is computed as follows: 
\begin{align*}
&\pi_\ell = \prec A, B, C, D \succ, 
\quad && {\pi_\ell}^\prec = \prec A, B, C, D \succ, 
\quad && I({\pi_\ell}^\prec) = [1 : 4],\\[4pt]
&\pi_r = \prec E, F, G, H \succ, 
\quad && {\pi_r}^\succ = \prec E, F, G, H \succ, 
\quad && I({\pi_r}^\succ) = [5 : 8],\\[6pt]
& p(A)_\sigma^{I({\pi_r}^\succ)} = 0, 
\quad && p(B)_\sigma^{I({\pi_r}^\succ)} = 0, \\[4pt]
& p(C)_\sigma^{I({\pi_r}^\succ)} = 1, 
\quad && p(D)_\sigma^{I({\pi_r}^\succ)} = 1, \\[6pt]
&\sum_{e \in {\pi_\ell}^\prec} p(e)_\sigma^{I({\pi_r}^\succ)} = 0 + 0 + 1 + 1 = 2.
\end{align*}

\end{example}
The following observations are derived directly from the LR-distance definition.
\begin{obs} For two  rankings $\pi$ and $\sigma$:
 \begin{align}
  \mathcal{LR}(\pi,\sigma)= \mathcal{LR}(\sigma,\pi).
 \end{align}
\end{obs}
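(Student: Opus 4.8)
To prove the symmetry $\mathcal{LR}(\pi,\sigma)=\mathcal{LR}(\sigma,\pi)$, the plan is to collapse the recursion in \eqref{def:lr-measure} into a single sum of terms that are visibly symmetric in their two arguments. The first point I would record is that the decomposition of the first argument into left and right subrankings is governed entirely by rank intervals, hence by $n=|U|$, and is independent of the underlying permutation: at a node whose subranking occupies ranks $[a:b]$ we always split at $m=a+\lfloor (b-a+2)/2\rfloor-1$, yielding children on ranks $[a:m]$ and $[m+1:b]$. Therefore the tree of rank intervals produced when computing $\mathcal{LR}(\pi,\sigma)$ (which halves $\pi=\omega$) coincides with the one produced when computing $\mathcal{LR}(\sigma,\pi)$ (which halves $\sigma=\omega$); moreover each internal node is naturally labelled by the single threshold $m$ separating its two children, and a routine induction shows every $m\in\{1,\dots,n-1\}$ arises exactly once (splitting a contiguous block resolves the cut between positions $m$ and $m+1$ at one node, and its two children then resolve the remaining cuts disjointly).

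Second, I would evaluate the contribution of a single node to \eqref{def:lr-measure}. At the node with threshold $m$, the left-extended left child ${\pi_\ell}^{\prec}$ occupies ranks $[1:m]$ and the right-extended right child ${\pi_r}^{\succ}$ occupies ranks $[m+1:n]$ with respect to $\omega=\pi$. Unwinding \cref{def:presence}, the term $\sum_{e\in{\pi_\ell}^{\prec}}p(e)_\sigma^{I({\pi_r}^{\succ})}$ counts exactly the elements $e$ with $\pi(e)\le m$ and $\sigma(e)>m$, while $\sum_{e\in{\pi_r}^{\succ}}p(e)_\sigma^{I({\pi_\ell}^{\prec})}$ counts those with $\pi(e)>m$ and $\sigma(e)\le m$. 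Writing
\[
C_m(\pi,\sigma)=\bigl|\{e:\pi(e)\le m<\sigma(e)\}\bigr|+\bigl|\{e:\sigma(e)\le m<\pi(e)\}\bigr|,
\]
the two presence terms at this node sum to $C_m(\pi,\sigma)$, which is manifestly invariant under interchanging $\pi$ and $\sigma$.

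Combining the two steps, I would conclude that $\mathcal{LR}(\pi,\sigma)=\sum_{m=1}^{n-1}C_m(\pi,\sigma)$, since the recursive calls $\mathcal{LR}(\pi_\ell,\sigma)$ and $\mathcal{LR}(\pi_r,\sigma)$ merely carry these per-node contributions up the (permutation-independent) tree. The identical identity holds with the roles of $\pi$ and $\sigma$ exchanged, and because every summand $C_m$ is symmetric the two totals agree, giving $\mathcal{LR}(\pi,\sigma)=\mathcal{LR}(\sigma,\pi)$. As a byproduct, counting for each $e$ the thresholds $m$ with $\min(\pi(e),\sigma(e))\le m<\max(\pi(e),\sigma(e))$ yields $\sum_m C_m(\pi,\sigma)=\sum_e|\pi(e)-\sigma(e)|$, which previews the equivalence with Spearman's footrule distance.

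I expect the main obstacle to be the bookkeeping in the second step: one must verify that the left/right extension conventions of \cref{def:rankinterval} make the summation ranges at each node partition the whole universe exactly at the threshold $m$, so that no element is double-counted or omitted, and handle the uneven split produced by the floor for odd-length intervals. Checking this against \cref{ex:lr} (where the seven thresholds $1,\dots,7$ each contribute a $C_m$ value and the contributions total $16$) is a useful sanity check before writing the induction in full.
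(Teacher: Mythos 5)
Your argument is correct and coincides with the paper's own machinery: your per-threshold count $C_m(\pi,\sigma)$ is exactly the quantity $d_m(\pi,\sigma)$ defined before \cref{lemma1}, and your collapse of the recursion into $\mathcal{LR}(\pi,\sigma)=\sum_{m=1}^{n-1}C_m(\pi,\sigma)$ is precisely that lemma (the paper states the observation without an explicit proof, as ``derived directly from the definition,'' but establishes this very decomposition immediately afterwards and uses it again for \cref{thm:lrspearman}). Since each $C_m$ is manifestly symmetric in $\pi$ and $\sigma$, the symmetry follows exactly as you say, so your proposal is correct and takes essentially the paper's route.
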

\begin{obs}
For a  ranking $\pi$ and  domain $\Pi$ of  rankings we have:
\begin{equation}
\resizebox{0.95\hsize}{!}{
$\displaystyle
\mathcal{LR}(\pi,\Pi)=
\begin{cases}
\sum\limits_{e\in {\pi_\ell}^\prec}(p(e)_\Pi^{I({\pi_r}^{\succ})})+\mathcal{LR}(\pi_\ell,\Pi)+\sum\limits_{e\in {\pi_r}^\succ}(p(e)_{\Pi}^{I({\pi_\ell}^{\prec})})+\mathcal{LR}(\pi_r,\Pi), & \text{if } |U_\pi|>1, \\
0, & \text{if } |U_\pi|\le 1.
\end{cases}
$
}
\end{equation}

\end{obs}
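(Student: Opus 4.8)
The plan is to prove the identity directly by linearity, unwinding the list-level LR-distance into its per-ranking summands and then applying the single-pair recursion of \cref{def:lr-measure} termwise. The essential observation that makes this work is that, with the first argument $\pi$ fixed, its decomposition into $\pi_\ell$ and $\pi_r$, the extended subrankings ${\pi_\ell}^{\prec}$ and ${\pi_r}^{\succ}$, and their rank intervals $I({\pi_r}^{\succ})$ and $I({\pi_\ell}^{\prec})$ all depend only on $\pi$ (equivalently, on the complete ranking $\omega=\pi$) and not at all on which $\pi_i\in\Pi$ plays the role of the second argument. Consequently the recursion tree induced on the first argument is identical for every $\pi_i$, and passing from a single ranking to the list amounts only to summing the presence counts over $\Pi$. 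No induction is needed, since \cref{def:lr-measure} already supplies the recursion for each individual pair as a definitional identity.

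For the base case $|U_\pi|\le 1$ I would note that $\mathcal{LR}(\pi,\pi_i)=0$ for every $\pi_i\in\Pi$ by \cref{def:lr-measure}, so that $\mathcal{LR}(\pi,\Pi)=\sum_{\pi_i\in\Pi}0=0$, matching the claimed value. For the recursive case $|U_\pi|>1$ the steps are: (i) expand $\mathcal{LR}(\pi,\Pi)=\sum_{\pi_i\in\Pi}\mathcal{LR}(\pi,\pi_i)$ using \cref{def:glr-measure}; (ii) replace each $\mathcal{LR}(\pi,\pi_i)$ by the four-term right-hand side of \cref{def:lr-measure}; (iii) distribute the outer sum over $\pi_i$ across the four terms, yielding four separate sums; (iv) in the first and third sums interchange the order of summation so that the sum over $e\in{\pi_\ell}^{\prec}$ (resp. $e\in{\pi_r}^{\succ}$) becomes outermost, which is legitimate precisely because the index set and the interval $I({\pi_r}^{\succ})$ (resp. $I({\pi_\ell}^{\prec})$) are constant in $\pi_i$; and (v) collapse $\sum_{\pi_i\in\Pi}p(e)_{\pi_i}^{I(\cdot)}$ into $p(e)_\Pi^{I(\cdot)}$ by the list-level presence definition in \cref{def:presence}, and collapse $\sum_{\pi_i\in\Pi}\mathcal{LR}(\pi_\ell,\pi_i)$ and $\sum_{\pi_i\in\Pi}\mathcal{LR}(\pi_r,\pi_i)$ into $\mathcal{LR}(\pi_\ell,\Pi)$ and $\mathcal{LR}(\pi_r,\Pi)$ by \cref{def:glr-measure}. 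The resulting expression is exactly the claimed right-hand side.

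There is no substantive difficulty; the only point demanding care is the bookkeeping in step (iv), namely verifying that every left and right extension in the unfolded recursion is taken with respect to the fixed complete ranking $\omega=\pi$, so that ${\pi_\ell}^{\prec}$, ${\pi_r}^{\succ}$, $I({\pi_r}^{\succ})$, and $I({\pi_\ell}^{\prec})$ really are independent of $\pi_i$ and may be pulled outside the sum. Once this is recorded, the interchange of summations and the subsequent collapse are purely formal, and the observation follows.
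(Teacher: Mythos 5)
Your argument is correct and matches the paper's intent: the paper states this observation without proof, noting only that it is ``derived directly from the LR-distance definition,'' and your proof is precisely that direct derivation --- expand $\mathcal{LR}(\pi,\Pi)=\sum_{\pi_i\in\Pi}\mathcal{LR}(\pi,\pi_i)$, apply the pairwise recursion termwise, and interchange sums using the fact that $\pi_\ell$, $\pi_r$, and their extended intervals do not depend on $\pi_i$. Your explicit attention to that independence is exactly the one point worth recording, and nothing further is needed.
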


\begin{obs}\label{obs3}
For two complete rankings $\pi$ and $\sigma$ (over the same elements), we have:
\begin{equation}
\begin{aligned}
\mathcal{LR}(\pi,\sigma)&=2\cdot \sum_{e\in {\pi_\ell}^\prec}(p(e)_\sigma^{I({\pi_r}^{\succ})})+\mathcal{LR}(\pi_\ell,\sigma)+\mathcal{LR}(\pi_r,\sigma)&\\
&=2\cdot \sum_{e\in {\pi_r}^\succ}(p(e)_\sigma^{I({\pi_\ell}^{\prec})})+\mathcal{LR}(\pi_\ell,\sigma)+\mathcal{LR}(\pi_r,\sigma).\\
\end{aligned}
\end{equation}
\end{obs}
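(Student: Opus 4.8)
The plan is to reduce both displayed identities to a single symmetry claim about the two ``cross terms'' in the definition of $\mathcal{LR}(\pi,\sigma)$. Since $\pi$ is complete we have $\omega=\pi$, so the top branch of \cref{def:lr-measure} reads
\begin{equation*}
\mathcal{LR}(\pi,\sigma) = \sum_{e \in \pi_\ell^\prec} p(e)_\sigma^{I(\pi_r^\succ)} + \mathcal{LR}(\pi_\ell,\sigma) + \sum_{e \in \pi_r^\succ} p(e)_\sigma^{I(\pi_\ell^\prec)} + \mathcal{LR}(\pi_r,\sigma).
\end{equation*}
Thus it suffices to show that the two cross sums are equal. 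Substituting the common value back in and factoring out the $2$ then produces both lines of the observation simultaneously, with the first line naming the common value via $\pi_\ell^\prec$ and the second via $\pi_r^\succ$.

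First I would unwind the extension notation. Set $k=\floor{(n+1)/2}$. Because $\pi$ is complete, $\pi_\ell$ has rank interval $[1:k]$ and $\pi_r$ has rank interval $[k+1:n]$, so the relevant extensions are trivial: $\pi_\ell^\prec = \pi_\ell$, $\pi_r^\succ = \pi_r$, with $I(\pi_\ell^\prec) = [1:k]$ and $I(\pi_r^\succ) = [k+1:n]$. Next I would translate the two cross sums into set cardinalities. Write $L_\pi, R_\pi$ for the sets of elements occupying the left and right halves of $\pi$, and $L_\sigma, R_\sigma$ for those of $\sigma$. Then $p(e)_\sigma^{I(\pi_r^\succ)}=1$ exactly when $\sigma(e)\in[k+1:n]$, i.e. when $e\in R_\sigma$, so the first cross sum counts the elements of $L_\pi$ landing in $R_\sigma$, giving $\sum_{e \in \pi_\ell^\prec} p(e)_\sigma^{I(\pi_r^\succ)} = |L_\pi \cap R_\sigma|$. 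Symmetrically, the second cross sum equals $|R_\pi \cap L_\sigma|$. The target identity is therefore $|L_\pi \cap R_\sigma| = |R_\pi \cap L_\sigma|$.

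Finally I would close with a one-line counting argument. Since both rankings are complete over the same universe $U$, the pairs $(L_\pi,R_\pi)$ and $(L_\sigma,R_\sigma)$ are two partitions of $U$ into a left block and a right block, with $|L_\pi| = |L_\sigma| = k$. Decomposing $L_\pi$ along the second partition gives $|L_\pi| = |L_\pi \cap L_\sigma| + |L_\pi \cap R_\sigma|$, and decomposing $L_\sigma$ along the first gives $|L_\sigma| = |L_\pi \cap L_\sigma| + |R_\pi \cap L_\sigma|$; subtracting these and using $|L_\pi| = |L_\sigma|$ cancels the common term $|L_\pi \cap L_\sigma|$ and yields $|L_\pi \cap R_\sigma| = |R_\pi \cap L_\sigma|$, as required.

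I expect no genuine obstacle here: the content is a mirror-symmetry of ``median crossings'' that must balance precisely because the two left halves have the same cardinality $k$. The only points demanding care are the bookkeeping with the left/right extension notation (verifying that the extensions collapse when $\pi$ is complete) and the clean rewriting of each presence sum as an intersection count; once those are in place, the equality of the cross terms is immediate and both stated forms of $\mathcal{LR}(\pi,\sigma)$ follow at once.
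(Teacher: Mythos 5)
Your proposal is correct and takes essentially the same route as the paper: both reduce the observation to showing the two cross sums are equal, and both establish that equality by the same completeness-based counting argument (the paper tallies through the right halves, $|R_\pi|=|R_\sigma|=n-k$, while you tally through the left halves, $|L_\pi|=|L_\sigma|=k$ — a cosmetic difference only).
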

\begin{proof}
Consider \cref{def:lr-measure}, and let ${\pi_\ell}^\prec=\prec e_1,\ldots, e_k\succ$ and 
${\pi_r}^\succ=\prec e_{k+1},\ldots, e_n\succ$. 
If $\sum\limits_{e\in {\pi_\ell}^\prec}(p(e)_\sigma^{I({\pi_r}^{\succ})})=i$,  only $i$ elements of the subranking ${\pi_\ell}^\prec$  are 
present in the interval $[k+1:n]$ of ranking $\sigma$.  Hence, the remaining $n-k-i$ elements in the interval $[k+1:n]$ of the 
ranking $\sigma$ are present in ${\pi_r}^\succ$. Since $|\{e\in {\pi_r}^\succ\}|=n-k$ and $\sigma$ is a complete ranking,  it follows that
 $(n-k)-(n-k-i)=i$ elements of ${\pi_r}^\succ$ appear in the interval $[1:k]$ of $\sigma$. Therefore, $\sum\limits_{e\in {\pi_r}^\succ}
 (p(e)_\sigma^{I({\pi_\ell}^{\prec})})=i=\sum\limits_{e\in {\pi_\ell}^\prec}(p(e)_\sigma^{I({\pi_r}^{\succ})})$.
\end{proof}
To illustrate the relationship between the   LR distance and Spearman's footrule distance,  consider the  Spearman's footrule distance between the rankings $\pi$ and $\sigma$ from   \cref{ex:lr} shown in \cref{ex:table-example}. In this example, we observe that  $\mathcal{LR}(\pi,\sigma)=\mathcal{F}(\pi,\sigma)$.
 In the following, as the main property of the LR-distance, we prove that the LR-distance between two complete rankings is equal to their 
Spearman's footrule distance.

\begin{table}[h]
\centering
\vspace*{-0.6pt}
\caption{Computing Spearman's footrule distance in \cref{ex:lr}}\label{ex:table-example}
\begin{tabular}{lccccccccc}
\toprule
\textbf{Element}        & \textbf{A} & \textbf{B} &\textbf{ C} & \textbf{D} & \textbf{E} & \textbf{F} & \textbf{G} & \textbf{H} & \textbf{Sum} \\
\midrule
$\bm{\pi(e)}$        & 1 & 2 & 3 & 4 & 5 & 6 & 7 & 8 &  \\
$\bm{\sigma(e)}$        & 3 & 1 & 5 & 8 & 4 & 6 & 2 & 7 &  \\
$\bm{|\pi(e)-\sigma(e)|}$ & 2 & 1 & 2 & 4 & 1 & 0 & 5 & 1 & \textbf{16} \\
\bottomrule
\end{tabular}
\end{table}
\begin{theorem}\label{thm:lrspearman}
For any two  rankings $\pi$ and $\sigma$ over the same universe $U$ (i.e. $U_\pi=U_\sigma$),
\begin{align}
\mathcal{LR}(\pi,\sigma)=\mathcal{F}(\pi,\sigma).
\end{align}
\end{theorem}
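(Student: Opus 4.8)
The plan is to show that both $\mathcal{F}(\pi,\sigma)$ and $\mathcal{LR}(\pi,\sigma)$ equal the same threshold-crossing quantity. For each threshold $t\in\{1,\ldots,n-1\}$, let $c_t$ denote the number of elements lying among the top $t$ positions of $\pi$ but not of $\sigma$, that is $c_t=|\{e\in U:\pi(e)\le t \wedge \sigma(e)>t\}|$. Because $\pi$ and $\sigma$ are both complete rankings over the same universe, exactly $t$ elements satisfy $\pi(e)\le t$ and exactly $t$ satisfy $\sigma(e)\le t$, so the counting argument behind \cref{obs3} applies at \emph{every} threshold $t$ and gives $|\{e:\pi(e)>t\wedge\sigma(e)\le t\}|=c_t$ as well.

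First I would record a threshold decomposition of Spearman's footrule distance. Starting from \cref{eq:spear} and using the elementary identity $|a-b|=\sum_{t=1}^{n-1}\big|\mathbb{1}[a\le t]-\mathbb{1}[b\le t]\big|$ for $a,b\in\{1,\ldots,n\}$, substituting $a=\pi(e)$ and $b=\sigma(e)$ and exchanging the two summations yields $\mathcal{F}(\pi,\sigma)=\sum_{t=1}^{n-1}|\{e:(\pi(e)\le t)\ne(\sigma(e)\le t)\}|$. By the complementarity noted above, the disagreements at each threshold split evenly into the two directions, so $\mathcal{F}(\pi,\sigma)=\sum_{t=1}^{n-1}2c_t$.

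Next I would unfold the recursion in \cref{def:lr-measure} into its recursion tree, whose nodes are the subrankings produced by repeated halving, with rank intervals forming a laminar family over $[1:n]$: the $n$ leaves are the singleton intervals, and each of the $n-1$ internal nodes splits some interval $[a:b]$ at its midpoint $s$. The crucial point is that the extensions are taken with respect to the full ranking $\omega=\pi$, so at such a node ${\pi_\ell}^{\prec}$ has interval $[1:s]$ and ${\pi_r}^{\succ}$ has interval $[s+1:n]$. Hence the first term $\sum_{e\in{\pi_\ell}^{\prec}}p(e)_\sigma^{I({\pi_r}^{\succ})}$ counts exactly the elements with $\pi(e)\le s$ and $\sigma(e)>s$, i.e. it equals $c_s$, and by the same complementarity (a direct generalization of \cref{obs3} from the root cut to the cut $s$) the third term equals $c_s$ too. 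Thus each internal node contributes precisely $2c_s$, while the base case $|U_\pi|\le 1$ (a leaf) contributes $0$.

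Finally I would match the two sums. It remains to show that the split thresholds of the $n-1$ internal nodes are exactly $1,\ldots,n-1$, each occurring once. This follows because positions $t$ and $t+1$ are separated at a unique node of the laminar recursion tree — their lowest common ancestor — whose split point is precisely $t$, giving a bijection between internal nodes and thresholds. Summing the node contributions then yields $\mathcal{LR}(\pi,\sigma)=\sum_{t=1}^{n-1}2c_t=\mathcal{F}(\pi,\sigma)$. I expect the main obstacle to be this last combinatorial bookkeeping: making rigorous that every threshold is realized by exactly one midpoint cut, together with the verification that the extension conventions force each node's two cross-terms to collapse to the single count $c_s$ (the generalization of \cref{obs3} to subrankings). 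An alternative, essentially equivalent route is a direct induction on $|U_\pi|$ using the strengthened invariant $\mathcal{LR}(\tau,\sigma)=\sum_{t=a}^{b-1}2c_t$ for every subranking $\tau$ with interval $[a:b]$, with \cref{obs3} supplying the inductive step.
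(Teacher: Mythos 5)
Your proposal is correct and follows essentially the same route as the paper: the paper's quantity $d_i(\pi,\sigma)$ is exactly your $2c_i$ (with \cref{obs3} supplying the complementarity), its \cref{lemma1} is precisely your strengthened invariant $\mathcal{LR}(\tau,\sigma)=\sum_{t=a}^{b-1}2c_t$ proved by induction on the recursion, and its final step identifies $\sum_i d_i$ with $\mathcal{F}$ by counting each element's contribution $|\pi(e)-\sigma(e)|$, which is just your threshold identity read in the other direction. The only cosmetic difference is your LCA bijection packaging of the "each threshold is cut exactly once" bookkeeping, which the paper handles by the explicit induction you also mention.
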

First, we observe the following  lemma.
Let $\pi^i$ and ${\bar{\pi}}^i$, for $i\in [1,\ldots ,n-1]$, denote the subrankings of a complete ranking $\pi=\prec e_1,\ldots, e_n\succ$ corresponding to rank intervals  $[1:i]$ and $[i+1:n]$, 
respectively. Assuming $\sigma$  is another complete ranking, we define:
\begin{align*}
d_i(\pi,\sigma):=\sum_{e\in {\pi^i}}(p(e)_\sigma^{I(\bar{\pi}^i)})+\sum_{e\in {\bar{\pi}^i}}(p(e)_{\sigma}^{I(\pi^i)}).
\end{align*}
\begin{lemma}\label{lemma1}
For a ranking $\sigma$ and a subranking  $\omega$ of   ranking $\pi$, with   $I(\omega)=[j:k]$,~we~have: 
\begin{align}\label{lem:lr-relation}
\mathcal{LR}(\omega, \sigma)=\sum_{i=j}^{k-1}d_i(\pi,\sigma).
\end{align}
\end{lemma}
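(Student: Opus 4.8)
The plan is to prove the identity by induction on the number of elements $m = k-j+1$ of the subranking $\omega$, unfolding a single level of the recursive definition of $\mathcal{LR}$ at each step and recognizing the two ``crossing'' sums it produces as one term $d_s(\pi,\sigma)$, where $s$ is the position at which $\omega$ is split into $\omega_\ell$ and $\omega_r$. For the base case $m=1$ (so $j=k$) we have $|U_\omega|\le 1$, hence $\mathcal{LR}(\omega,\sigma)=0$ by definition, while the right-hand side $\sum_{i=j}^{k-1}d_i(\pi,\sigma)$ is an empty sum and therefore also $0$.

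For the inductive step, suppose $m\ge 2$ and let $s$ denote the split position, so that $\omega_\ell$ has rank interval $[j:s]$ and $\omega_r$ has rank interval $[s+1:k]$. Since $\omega_\ell$ receives $\floor{(m+1)/2}$ elements, a value lying strictly between $0$ and $m$ whenever $m\ge 2$, both halves are nonempty and $j \le s \le k-1$. The key observation is that, with respect to the complete ranking $\pi$, the left-extension $\omega_\ell^{\prec_\pi}$ has rank interval $[1:s]$ and the right-extension $\omega_r^{\succ_\pi}$ has rank interval $[s+1:n]$; hence $\omega_\ell^{\prec_\pi}=\pi^s$ and $\omega_r^{\succ_\pi}=\bar{\pi}^s$ in the notation preceding the lemma.

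Substituting these identifications into the recursive definition of $\mathcal{LR}(\omega,\sigma)$, the first and third summands become exactly $\sum_{e\in\pi^s}p(e)_\sigma^{I(\bar{\pi}^s)}$ and $\sum_{e\in\bar{\pi}^s}p(e)_\sigma^{I(\pi^s)}$, whose sum is precisely $d_s(\pi,\sigma)$. The recursion therefore collapses to
\[
\mathcal{LR}(\omega,\sigma)=d_s(\pi,\sigma)+\mathcal{LR}(\omega_\ell,\sigma)+\mathcal{LR}(\omega_r,\sigma).
\]
I would then apply the inductive hypothesis to $\omega_\ell$ and $\omega_r$, whose intervals $[j:s]$ and $[s+1:k]$ are shorter than $[j:k]$, yielding $\mathcal{LR}(\omega_\ell,\sigma)=\sum_{i=j}^{s-1}d_i(\pi,\sigma)$ and $\mathcal{LR}(\omega_r,\sigma)=\sum_{i=s+1}^{k-1}d_i(\pi,\sigma)$. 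Because the index sets $[j:s-1]$, $\{s\}$, and $[s+1:k-1]$ partition $[j:k-1]$, the three contributions combine to $\sum_{i=j}^{k-1}d_i(\pi,\sigma)$, completing the induction.

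The main obstacle is the careful bookkeeping of the extension intervals: confirming that $\omega_\ell^{\prec_\pi}$ and $\omega_r^{\succ_\pi}$ coincide with $\pi^s$ and $\bar{\pi}^s$ respectively, and that the split index $s$ falls in the range $[j,k-1]$ so that the three index ranges tile $[j,k-1]$ exactly. Once this alignment is pinned down—keeping in mind that here $\omega$ is the subranking and $\pi$ the ambient complete ranking, the reverse of the roles in the definition of $\mathcal{LR}$—the remaining algebra is immediate.
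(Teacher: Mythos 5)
Your proof is correct and follows essentially the same route as the paper's: both unfold one level of the recursive definition, identify the two crossing sums with $d_s(\pi,\sigma)$ at the split point $s=\floor{(j+k)/2}$ via the observation that $\omega_\ell^{\prec_\pi}=\pi^s$ and $\omega_r^{\succ_\pi}=\bar{\pi}^s$, and combine the inductive hypotheses on the two halves by tiling the index range $[j,k-1]$. The only cosmetic difference is that you induct on the number of elements of $\omega$ rather than on the recursion depth, and you are somewhat more explicit about verifying that the split index lands in $[j,k-1]$.
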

\begin{proof}
The proof is by induction on the recursion depth of the definition of the LR-distance. When the  recursion depth is zero,  $k-j\le 0$, both $\sum_{i=j}^{k-1}d_i(\pi,\sigma)$ and $\mathcal{LR}(\omega, \sigma)$ evaluate to zero, and the statement is trivially true. Now, assume the statement holds for calls at  recursion depth  of $d-1$, in the recursive definition of 
the LR-distance (\cref{def:lr-measure}). Consider a recursive call of LR-distance at depth  $d$ on the ranking 
$\omega=\prec~e_{j},e_{j+1},\ldots,e_{k}\succ$.
According to the recursive definition:
\begin{align*}
\mathcal{LR}(\omega,\sigma)=\sum\limits_{e\in {\omega_\ell}^{\prec_\pi}}(p(e)_\sigma^{I({\omega_r}^{\succ_\pi})})+\mathcal{LR}
(\omega_\ell,\sigma)+~\sum\limits_{e\in {\omega_r}^{\succ_\pi}}(p(e)_{\sigma}^{I({\omega_\ell}^{\prec_\pi})})+\mathcal{LR}(\omega_r,\sigma).
\end{align*}
Because $\mathcal{LR}(\omega_\ell,\sigma)$ and $\mathcal{LR}(\omega_r,\sigma)$ are recursive calls with depth  $d-1$, in which $\omega_\ell=\prec 
e_{j},e_{j+1},\ldots,\\e_{\floor{(j+k)/2}}\succ$ and 
$\omega_r=\prec e_{\floor{(j+k)/2}+1},\ldots,e_{k}\succ$, by induction we have:

\noindent\begin{tabularx}{\textwidth}{@{}XX@{}}
\begin{equation}
  \mathcal{LR}(\omega_\ell,\sigma)=\sum_{i=j}^{\floor{(j+k)/2}-1}d_i(\pi,\sigma) 
    \label{eq:main1}
  \end{equation} &
 \begin{equation}
 \mathcal{LR}(\omega_r,\sigma)=\sum_{i=\floor{(j+k)/2}+1}^{k-1}d_i(\pi,\sigma)
    \label{eq:main2}
  \end{equation}
  \end{tabularx}

Let $m:=\floor{(j+k)/2}$.
Note that ${\omega_\ell}^{\prec_\pi}$ and $ {\omega_r}^{\succ_\pi}$ are equal to $\pi^{m}$ and $\bar{\pi}^{m}$, respectively. Hence, 
$I({\omega_\ell}^{\prec_\pi})$ and  $I({\omega_r}^{\succ_\pi})$ are equal to $I(\pi^m)$ and $I(\bar{\pi}^{m})$, respectively. Therefore, we have:
\begin{small}
\begin{align*}
\sum\limits_{e\in {\omega_\ell}^{\prec_\pi}}(p(e)_\sigma^{I({\omega_r}^{\succ_\pi})})&+\sum\limits_{e\in {\omega_r}^{\succ_\pi}}
(p(e)_{\sigma}^{I({\omega_\ell}^{\prec_\pi})})=\\
\sum_{e\in {\pi^m}}(p(e)_\sigma^{I(\overline{\pi}^{m})})\quad & +\sum_{e\in {\overline{\pi}^m}}(p(e)_{\sigma}^{I(\pi^m)})\quad 
=d_m(\pi,\sigma)=d_{\floor{(j+k)/2}}(\pi,\sigma).
\end{align*}
\end{small}
Together with  \cref{eq:main1} and \cref{eq:main2}, we conclude that $\mathcal{LR}(\omega,\sigma)=\sum_{i=j}^{k-1}d_i(\pi,\sigma) $ which completes the proof of the lemma.    
\end{proof}
\begin{proofof}{thm:lrspearman}
As a consequence of the above lemma, we have
\begin{align}\label{lem:lr-relationmain}
\mathcal{LR}(\pi,\sigma)=\sum_{i=1}^{n-1}d_i(\pi,\sigma),
\end{align}
where $n$ is the size of the rankings. Consider an arbitrary element $e\in \pi$. Recall that $\pi(e)$ and $\sigma(e)$ denote  the positions 
of $e$ in the rankings $\pi$ and $\sigma$, respectively. Each element $e$ contributes one to the terms $d_i(\pi,\sigma)$ of   \cref{lem:lr-relationmain} with  $\pi(e) \leq i<\sigma(e)$ or $t(e) \leq i<\pi(e)$. The number of such positions $i$ is equal to $|\pi(e)-\sigma(e)|$. 
Therefore, the number of such terms and hence the contribution of each element $e$ to the LR-distance is equal to $|\pi(e)-\sigma(e)|$. Summing this over all elements $e$, we obtain: 
\begin{align*}
\mathcal{LR}(\pi,\sigma)=\sum_{e}|\pi(e)-\sigma(e)|=\mathcal{F}(\pi,\sigma).
\end{align*}
\end{proofof}
~\cref{thm:lrspearman} implies minimizing the LR-distance is equivalent to minimizing Spearman’s footrule distance. 
LR-distance definition enables the design of an efficient algorithm  
for Spearman's footrule-based rank aggregation in dynamic settings. Since intuition behind the LR-distance forms the basis of the proposed LR-Aggregation algorithm,  
we refer to the problem of finding a  ranking $\pi$  
that minimizes $\mathcal{LR}(\pi, \Pi)$ as the  
\textit{LR-Aggregation problem} rather than the  
Spearman aggregation problem.

\subsection{The Data Structure for LR-Aggregation}\label{subsec:datastructure}
We introduce a data structure for efficient LR-Aggregation.
Consider a  domain $\Pi$ of  rankings over elements $e_1,\ldots,e_n$.
We  assume   $n=2^k$ for some integer $k\geq 1$, as padding with dummy elements ensures this condition without affecting  the  complexity.

Each new ranking updates the element presence values  with respect  to the  domain, impacting the aggregated ranking.
To efficiently access and update  these values, we employ a balanced binary search tree $T(e)$, called an \textit{LR-tree},  for each element $e$.
The  LR-tree  of each element  stores its  presence values in different intervals of the rankings.

Access to LR-trees is facilitated by an array $\mathbf{R}_{LR}$, called  the \textit{LRroots array}, which stores pointers to their roots.
Each node $u$ in $T(e)$ corresponds to an interval $I$ and stores $p(e)^{I}_\Pi$.
The root node corresponds to the  full interval $I=[1:n]$, and  stores the presence of $e$ with respect to $\Pi$ and $I$.
The  tree is built recursively  by dividing each node’s interval into left and right parts,  and storing $p(e)^{I_{left}}_\Pi$ and $p(e)^{I_{right}}_\Pi$ at the left   and right child  nodes, respectively. 

For non-leaf nodes, we define a \textit{score} attribute, used in the LR-Aggregation process.
Given a non-leaf node $u$ with $I(u) = [a: b]$, we define $I^\prec(u) = [1: b]$, analogous to \cref{def:rankinterval}, 
and compute its score as:
\begin{align}
\text{score}(u) = p(e)_{\Pi}^{I^{\prec}(lc(u))}.
\end{align}

Thus, each node $u$ in $T(e)$ has the following attributes: 
left child $lc(u)$, right child $rc(u)$, parent $par(u)$, interval $I(u)$, 
presence value $p(u) = p(e)^{I(u)}_\Pi$, and  $\text{score}(u)$.

The following observation directly follows from the score definition.
\begin{obs}\label{obs:score} For a node $u$ in an LR-tree, 

$\bullet$ If $u$ is the root,  then $score(u)=p\big(lc(u)\big)$.

$\bullet$ If $u$ is a left child, then $score(u)=score\big(par(u)\big)-p\big(rc(u)\big)$.

$\bullet$ If $u$ is a right child, then $score(u)=score\big(par(u)\big)+p\big(lc(u)\big)$.
\end{obs}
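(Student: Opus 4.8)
The plan is to prove all three cases by directly unwinding the definition $\text{score}(u) = p(e)^{I^\prec(lc(u))}_\Pi$ and exploiting one elementary fact: presences are additive over disjoint intervals. Concretely, from \cref{def:presence} it is immediate that if an interval splits as $[a:c] = [a:b] \cup [b+1:c]$, then $p(e)^{[a:c]}_\Pi = p(e)^{[a:b]}_\Pi + p(e)^{[b+1:c]}_\Pi$, since each ranking $\pi\in\Pi$ places $e$ at exactly one position, which lies in at most one of the two subintervals. I would record this additivity first, as it is the only arithmetic used in the argument.

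For the root, $I(u) = [1:n]$, so its left child has interval $I(lc(u)) = [1:m]$ for the appropriate midpoint $m$. The left extension then adds nothing, i.e. $I^\prec(lc(u)) = [1:m] = I(lc(u))$, and hence $\text{score}(u) = p(e)^{[1:m]}_\Pi = p(lc(u))$, as claimed.

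For the two child cases I would first isolate the following bookkeeping identity: if $u = lc(par(u))$ with $I(u) = [a:m]$, then by the definition of score $\text{score}(par(u)) = p(e)^{I^\prec(lc(par(u)))}_\Pi = p(e)^{I^\prec(u)}_\Pi = p(e)^{[1:m]}_\Pi$. Writing $I(lc(u)) = [a:m']$ and $I(rc(u)) = [m'+1:m]$, I split $[1:m] = [1:m'] \cup [m'+1:m]$ and apply additivity to obtain $\text{score}(par(u)) = p(e)^{[1:m']}_\Pi + p(e)^{[m'+1:m]}_\Pi = \text{score}(u) + p(rc(u))$, which rearranges to the left-child formula. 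The right-child case is symmetric: if $u = rc(par(u))$ with $I(par(u)) = [a:b]$ and midpoint $m$, then $\text{score}(par(u)) = p(e)^{[1:m]}_\Pi$ since $I(lc(par(u))) = [a:m]$; writing $I(lc(u)) = [m+1:m'']$ gives $\text{score}(u) = p(e)^{[1:m'']}_\Pi$, and splitting $[1:m''] = [1:m] \cup [m+1:m'']$ with additivity yields $\text{score}(u) = \text{score}(par(u)) + p(lc(u))$.

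No step here is genuinely hard; the only part demanding care is the bookkeeping of interval endpoints, together with the observation that the left extension $I^\prec$ of a left child coincides with that of its parent's left child, which is exactly what lets $\text{score}(par(u))$ be rewritten as the presence over $[1:m]$. I would therefore keep the midpoint notation explicit throughout to avoid off-by-one confusion, after which the three identities fall out immediately from the definition and additivity.
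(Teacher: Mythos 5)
Your proof is correct: the paper offers no explicit proof (it asserts the observation "directly follows from the score definition"), and your argument—unwinding $\text{score}(u)=p(e)_\Pi^{I^\prec(lc(u))}$ and using additivity of presence over disjoint intervals, with careful tracking of the midpoints—is exactly the intended justification, filled in completely. The key step you isolate, that $I^\prec(lc(par(u)))=I^\prec(u)$ when $u$ is a left child (and analogously $I^\prec(lc(par(u)))=[1:m]$ when $u$ is a right child), is precisely what makes the parent's score reusable, and all three cases check out.
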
 
\cref{alg:score} formally describes the score computation. To demonstrate this process, consider a portion of an LR-tree as shown in \cref{fig:lr-tree}.
\begin{align*}
&score(v)=p(v_1)=15.\\
&score(v_1)=p(v_3)=4.\\
&score(v_4)=p(v_3)+p(v_5)=10.
\end{align*}
\begin{figure}[H]
\centering
\includegraphics[scale=0.4]{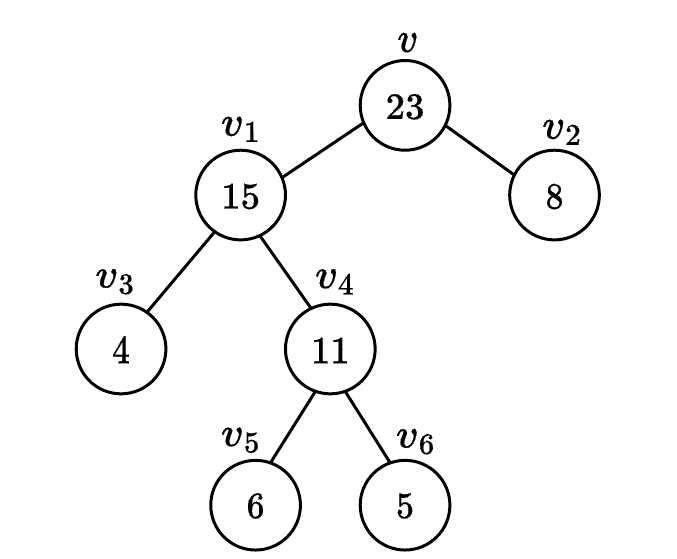}\caption{Score computation of a non-leaf  node in LR-tree $T(e)$}\label{fig:lr-tree}
\end{figure}
\begin{algorithm}[H]
\caption{\textsc{Score}($T$, $u$)} 
\label{alg:score}
\begin{algorithmic}[1]
\Require LR-tree $T$ and a node $u$
\Ensure $\text{score}(u)$ in $T$
\If{$par(u) = \text{null}$}
    \State \Return $p(lc(u))$
\Else
    \If{$u = lc(par(u))$}
        \State \Return \textsc{Score}$\left(T, par(u)\right)$ $-\, p(rc(u))$  
    \Else
        \State \Return \textsc{Score}$\left(T, par(u)\right)$ $+\, p(lc(u))$
    \EndIf
\EndIf
\end{algorithmic}
\end{algorithm}
The initial LR-trees are constructed recursively from the first ranking $\pi = \pi_1$,  
where each root stores $p(e)_\pi$, and subtrees are built from subrankings $\pi_\ell$ and $\pi_r$.  
\cref{fig:2}  illustrates initial LR-trees for elements $C$ and $G$ based on the initial ranking $\pi=\prec A, B, C, D, E, F, G, H \succ$   and \cref{alg:buildtree}  outlines the initial tree construction procedure.
\begin{figure}[H]
\centering
\begin{subfigure}{0.45\textwidth}
  \centering
  \includegraphics[width=0.8\linewidth]{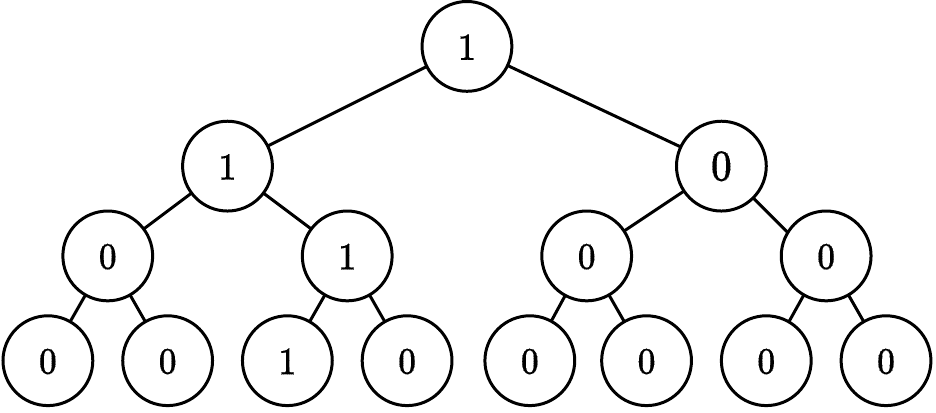}
  \caption*{$T(C)$}
\end{subfigure}%
\hspace{-0.8cm}
\begin{subfigure}{0.45\textwidth}
  \centering
  \includegraphics[width=0.8\linewidth]{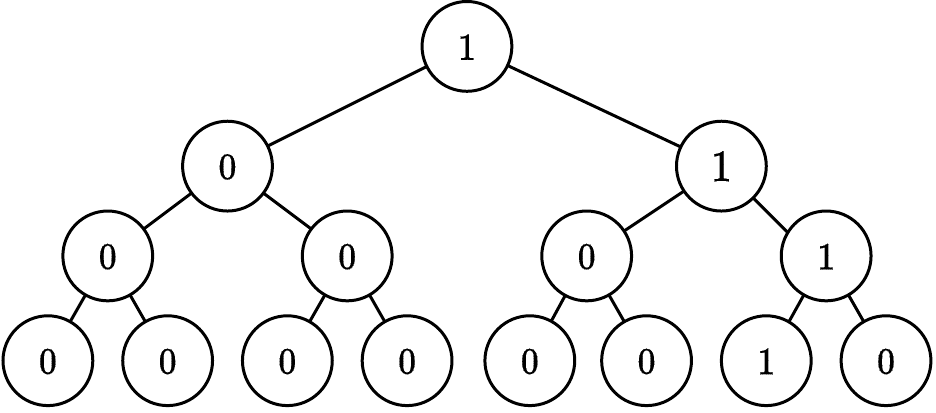}
  \caption*{$T(G)$}
\end{subfigure}
\caption{The initial LR-trees of elements $C$ and $G$ according to initial  ranking $\pi=\prec A, B, C, D, E, F, G, H\succ$}
\label{fig:2}
\end{figure}

\floatname{algorithm}{Algorithm}
\begin{algorithm}[H]
\caption{\textsc{Build LR-Tree}$(e, \pi)$}
\label{alg:buildtree}
\begin{algorithmic}[1]
\Require An element $e$ and a ranking $\pi$
\Ensure The root of the initial LR-tree $T(e)$
\If{$|U_\pi| = 0$}
    \State \Return $null$
\Else
    \State Create node $v$
    \State $p(v) \gets p(e)_\pi$
    \State $I(v) \gets I(\pi)$
    \State $lc(v) \gets$ \textsc{Build LR-Tree}$(e, \pi_\ell)$
    \State $rc(v) \gets$ \textsc{Build LR-Tree}$(e, \pi_r)$
    \State \Return $v$
\EndIf
\end{algorithmic}
\end{algorithm}
To update the LR-tree $T(e_i)$ upon receiving a new ranking, we increment the presence values of nodes whose intervals contain $\pi(e_i)$.
Starting from the root, we traverse down to a leaf node by selecting the child whose interval includes $\pi(e_i)$, incrementing presence value  by one along the path.
~\cref{alg:updatetree} formally describes this update procedure.
To update all LR-trees after receiving a ranking $\pi=\prec e_1,\ldots, e_n\succ$, we apply this procedure to each $T(e_i)$, for all $1\leq i\leq n$.

\floatname{algorithm}{Algorithm}
\begin{algorithm}[H]
\caption{\textsc{Update LR-Tree}$(v, \pi(e), n)$}
\label{alg:updatetree}
\begin{algorithmic}[1]
\Require Root $v$ of LR-tree $T(e)$, the rank $\pi(e)$ of element $e$, and the size of ranking $\pi$
\Ensure Updated LR-tree $T(e)$
\State $p(v) \gets p(v) + 1$
\State $j \gets 1$, $mid \gets n / 2$
\While{$v$ is not a leaf}
    \State $j \gets j + 1$
    \If{$\pi(e) \leq mid$}
        \State $v \gets lc(v)$
        \State $mid \gets mid - n / 2^j$
    \Else
        \State $v \gets rc(v)$
        \State $mid \gets mid + n / 2^j$
    \EndIf
    \State $p(v) \gets p(v) + 1$
\EndWhile
\end{algorithmic}
\end{algorithm}

\begin{theorem}\label{thm:updatelrtree}
 The set of LR-trees for a rank aggregation domain $\Pi=[\pi_1,\dots,\pi_m]$ over $n$ elements requires $O(n^2)$ space and updates in $O(n \log n)$ time per new ranking $\pi=\prec e_1,\dots,e_n\succ$. \footnote{This is based on the assumption that counters take $O(1)$ space irrespective of how many times they are
incremented (i.e.\ the \textit{Real RAM} Model of computation). If the number of bits in a counter is also taken 
into account (i.e.\ the \textit{Word RAM} model of computation), then a multiplicative factor of $\log m$ also 
has to be included in space- and time-complexities.}
\end{theorem}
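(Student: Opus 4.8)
The plan is to analyze the two claims—space and update time—separately, treating each LR-tree as a perfect binary tree over the interval $[1:n]$. Since $n=2^k$, the tree $T(e)$ built by \cref{alg:buildtree} recursively halves the interval until reaching singletons, so it has exactly $n$ leaves and $n-1$ internal nodes, i.e.\ $2n-1=O(n)$ nodes in total. Each node stores a constant number of attributes ($lc$, $rc$, $par$, $I(u)$, $p(u)$, and $\text{score}(u)$), so a single tree occupies $O(n)$ space. Because we maintain one LR-tree per element, there are $n$ trees, giving total space $n\cdot O(n)=O(n^2)$.

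For the update time, I would first argue that \cref{alg:updatetree} touches exactly the nodes whose intervals contain $\pi(e_i)$, and that these form a single root-to-leaf path. In a perfect binary tree over $[1:n]$, every level partitions $[1:n]$ into disjoint intervals, so a fixed position $\pi(e_i)$ lies in exactly one node per level; following the child whose interval contains $\pi(e_i)$ therefore traces out a unique path from the root to the leaf for position $\pi(e_i)$. Incrementing $p(v)$ by one at each node on this path correctly updates $p(e_i)^{I}_\Pi$ for every relevant interval $I$, since the new ranking contributes $p(e_i)^I_\pi=1$ precisely when $\pi(e_i)\in I$ and $0$ otherwise.

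The per-tree cost is then the path length times $O(1)$ work per node. The height of a perfect binary tree with $n=2^k$ leaves is $\log_2 n$, so the path has $\log_2 n + 1 = O(\log n)$ nodes, and a single call to \cref{alg:updatetree} runs in $O(\log n)$ time. Updating all LR-trees upon arrival of $\pi=\prec e_1,\dots,e_n\succ$ requires one such call per element, hence $n\cdot O(\log n)=O(n\log n)$ time. Finally, I would observe that neither bound depends on $m$: each new ranking only increments existing counters without creating nodes, so both the space and the per-update time remain independent of how many rankings have been processed.

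The main obstacle is not the arithmetic but making precise the claim that incrementing along a single root-to-leaf path both suffices and is correct—this rests on the segment-tree-style fact that the nodes whose intervals contain a fixed position form exactly one such path, which I would state and justify explicitly before invoking it. I would also flag the Real RAM assumption recorded in the footnote: each counter increment is treated as $O(1)$ regardless of the magnitude stored, whereas under the Word RAM model the counters may hold values up to $m$, adding a multiplicative $\log m$ factor to both the space and time bounds.
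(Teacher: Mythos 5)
Your proposal is correct and follows essentially the same argument as the paper: each perfect binary LR-tree has $O(n)$ nodes with $O(1)$ data each, giving $O(n^2)$ total space over $n$ trees, and each update traverses one root-to-leaf path of length $O(\log n)$ per element, giving $O(n\log n)$ per new ranking. The extra care you take in justifying that the affected nodes form a single root-to-leaf path is a correctness point the paper leaves implicit, but it does not change the complexity analysis.
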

\begin{proof}
Note that the leaves  in each LR-tree correspond to the intervals in the form of $[i]$, representing a position in the ranking. Consequently, each LR-tree has 
$n$ leaves. Since each LR-tree is a complete binary tree  and each node (both leaves and internal) uses 
$O(1)$ storage,  each LR-tree requires  $O(n)$ storage. Therefore, the total amount of required storage for storing all LR-trees
corresponding to $n$ elements is $O(n^2)$.
 
 The update process involves traversing each LR-tree, where the path length is $O(\log n)$. Since updating each node requires $O(1)$ time, updating each tree takes  $O(\log n)$ time. Using the pointers to the roots of LR-trees stored in  array $\mathbf{R}_{LR}$, each tree can be accessed  in $O(1)$ time. Consequently, the total time required for updating the entire  set of  LR-trees is $O(n\log n)$%
\end{proof}
\subsection{LR-Aggregation Algorithm}\label{subsec-dyrankagg}
Given a rank  aggregation domain  $\Pi$, the optimal LR-Aggregation that we are looking for is a ranking $\pi$ that 
 minimizes $\mathcal{LR}(\pi,\Pi)$, i.e.,
\begin{equation*}
\sum\limits_{e\in {{{\pi_\ell}^{\prec}}}}(p(e)_\Pi^{I({\pi_r}^{\succ})})+\mathcal{LR}(\pi_\ell,\Pi)+\sum\limits_{e\in 
{{{\pi_r}^{\succ}}}}(p(e)_{\Pi}^{I({\pi_\ell}^{\prec})})+\mathcal{LR}(\pi_r,\Pi).
\end{equation*}
Partitioning elements by their LR-tree root scores (${lp(e)}_\Pi$) around the median, placing higher-score elements in the left partition and lower-score elements in the right, yields  a ranking $\pi$ that minimizes:
\begin{equation*}
\sum\limits_{e\in {{{\pi_\ell}^{\prec}}}}(p(e)_\Pi^{I({\pi_r}^{\succ})})+\sum\limits_{e\in {{\pi_r}^{\succ}}}
(p(e)_{\Pi}^{I({\pi_\ell}^{\prec})}).
\end{equation*}
However, the definition of LR-distance is recursive. The  aforementioned value equals  the  contribution of $\mathcal{LR}(\pi,\Pi)$
in the initial recursive call of the distance measure. To apply recursive definition  in subsequent levels, we repeat the  previous partitioning process independently on the left and right halves of the obtained ranking, as detailed below. 
Note that the ranking obtained by merging $\pi_\ell$ and $\pi_r$ may no longer minimize the previous value, so repeating this procedure may not always yield the optimal LR-Aggregation. However, as we will see, experimental comparisons demonstrate that the LR distance between the output of this recursive algorithm and the domain is very close to the optimal value.

We now describe the LR-Aggregation algorithm’s steps in detail.
Let  $\omega_m$ be the current aggregated ranking of $\Pi_m=[\pi_1,\ldots,\pi_{m}]$.  The elements are stored in the  aggregation array,
$\mathbf{A}_{agg}$, in the order of their presence in $\omega_m$. Upon  receiving $\pi_{m+1}=\prec e_1,\ldots e_n\succ$, $\mathbf{A}_{agg}$ is updated to obtain the new aggregated ranking $\omega_{m+1}$. This update follows two main steps:

First, the LR-trees of elements in $\pi_{m+1}$ are updated as described in \cref{subsec:datastructure}. 
Then, $\mathbf{A}_{agg}$ is reordered based on the updated LR-trees, producing  $\omega_{m+1}$.

A recursive procedure reorders $\mathbf{A}_{agg}$ using scores computed at each recursion level. 
These scores are derived from LR-tree traversals, where at each recursion level, an element's score equals the score of its current node on the traversal path.
Traversal starts at the roots of LR-trees, computing $score(root(T(e_i)))$ for $1\leq i\leq n$. 
Elements in $\mathbf{A}_{agg}$ are then partitioned around the element with the median score. In cases where two or more elements  receive the same score, we resolve this by using the order of the
elements in the most recent aggregated ranking as the baseline, and maintain the elements in that order.
The recursion then proceeds on the left and right subarrays of $\mathbf{A}_{agg}$, with traversal paths continuing to the left and right subtrees of the LR-trees corresponding to the elements  in $\mathbf{A}_{agg}[1, \ldots, \lfloor n/2 \rfloor]$ and $\mathbf{A}_{agg}[\lfloor n/2 \rfloor + 1, \ldots, n]$, respectively.
As  traversal paths reach the second level of all trees, the recursive procedure repeats. During each recursive call, the score of every element is updated based on its current node, and the corresponding subarrays $\mathbf{A}_{agg}[1, \ldots, \lfloor n/2 \rfloor]$ and $\mathbf{A}_{agg}[\lfloor n/2 \rfloor + 1, \ldots, n]$ are reordered accordingly. 

This process continues recursively until the traversal paths reach the leaf nodes of  LR-trees. At this point, $\mathbf{A}_{agg}$ contains the final aggregated ranking $\omega_{m+1}$. 
 Each entry in $\mathbf{A}_{agg}$ stores a pointer $cr$ to its current  node in the LR-tree, initialized to the root of the corresponding LR-tree, enabling efficient traversal updates. 
It also maintains the $element$ and  $score$ attributes.

Pseudo-code for the  LR-Aggregation algorithm and the   aggregation array update procedure is provided in \cref{alg:dyrankagg,alg:uprankaggarray}.
\begin{algorithm}[H]
\caption{\textsc{LR-Aggregation}$(\mathbf{A}_{agg}, \pi, \mathbf{R}_{LR})$}
\begin{spacing}{0.95}
\label{alg:dyrankagg}
\begin{algorithmic}[1]
\Require Aggregation array $\mathbf{A}_{agg}$, input ranking $\pi$, LRroots array $\mathbf{R}_{LR}$
\Ensure Updated  aggregation array $\mathbf{A}_{agg}$
\State $n \gets$ size of ranking $\pi$
\For{$j \gets 1$ to $n$}
    \State $e \gets$ $j$-th element of $\pi$
    \State $v \gets \mathbf{R}_{LR}[e]$
    \State \textsc{Update LR-Tree}$(v, j, n)$
\EndFor
\For{$j \gets 1$ to $n$}
    \State $e \gets \mathbf{A}_{agg}[j].element$
    \State $\mathbf{A}_{agg}[j].cr \gets \mathbf{R}_{LR}[e]$
    \State $\mathbf{A}_{agg}[j].score \gets p(lc(\mathbf{R}_{LR}[e]))$
\EndFor
\State \textsc{Update  Aggregation Array}$(\mathbf{A}_{agg}, 1, n)$
\State\Return $\mathbf{A}_{agg}$
\end{algorithmic}
\end{spacing}
\end{algorithm}

\floatname{algorithm}{Algorithm}
\begin{algorithm}[H]
\caption{\textsc{Update  Aggregation Array}$(\mathbf{A}_{agg}, start, end)$}
\label{alg:uprankaggarray}
\begin{algorithmic}[1]
\Require Aggregation array $\mathbf{A}_{agg}$ and indices $start$ and $end$
\Ensure Updated  aggregation array
\State $mid \gets \floor{(start + end)/2}$
\State Partition $\mathbf{A}_{agg}$ around the median-score element
\Comment{The top-score elements are placed in $[start : mid]$}
\If{$start + 1 < end$}
    \For{$j \gets start$ to $mid$}
        \State $s \gets \mathbf{A}_{agg}[j].score$
        \State $\mathbf{A}_{agg}[j].cr \gets lc(\mathbf{A}_{agg}[j].cr)$
        \State $\mathbf{A}_{agg}[j].score \gets s - p(rc(\mathbf{A}_{agg}[j].cr))$
    \EndFor
    \For{$j \gets mid + 1$ to $end$}
        \State $s \gets \mathbf{A}_{agg}[j].score$
        \State $\mathbf{A}_{agg}[j].cr \gets rc(\mathbf{A}_{agg}[j].cr)$
        \State $\mathbf{A}_{agg}[j].score \gets s + p(lc(\mathbf{A}_{agg}[j].cr))$
    \EndFor  
    \State \textsc{Update    Aggregation Array}$(\mathbf{A}_{agg}, start, mid)$
    \State \textsc{Update    Aggregation Array}$(\mathbf{A}_{agg}, mid+1, end)$
\EndIf   
\end{algorithmic}
\end{algorithm}

\begin{theorem}\label{thm:dyrankagg}
Let $\omega_m$ be the aggregation of rankings in the rank aggregation domain $\Pi_m=[\pi_1,\ldots, \pi_m]$. The LR-Aggregation algorithm provides the aggregation $\omega_{m+1}$ in $O(n\log n)$ time after receiving $\pi_{m+1}$. 
\end{theorem}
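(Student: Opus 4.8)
The plan is to decompose the running time of \cref{alg:dyrankagg} into its three constituent parts and bound each separately. The first \textbf{for} loop invokes \textsc{Update LR-Tree} once for each of the $n$ elements of the incoming ranking $\pi_{m+1}$; by \cref{thm:updatelrtree} this phase costs $O(n\log n)$ (equivalently, each such call walks a single root-to-leaf path of length $O(\log n)$ doing $O(1)$ work per node, and there are $n$ calls). The second \textbf{for} loop merely resets each entry's current-node pointer to the appropriate LR-tree root and reads off its initial score as $p(lc(\cdot))$, which is $O(1)$ per entry and hence $O(n)$ in total. What remains is to bound the recursive reordering carried out by \textsc{Update Aggregation Array}, which dominates the analysis.

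Next I would set up a recurrence for \textsc{Update Aggregation Array} on a subarray of size $k$. Within a single call the algorithm (i) partitions the subarray around its median-score element, and (ii) runs two loops, over the lower and upper halves, that advance each entry's current-node pointer one level down its LR-tree and recompute its score. By \cref{obs:score}, each score update is a single addition or subtraction of a stored presence value and therefore takes $O(1)$ time, so the two loops contribute $O(k)$. Provided the partition in step (i) also runs in $O(k)$ time, the work outside the recursive calls is $O(k)$; since $n=2^k$ guarantees that each subarray of size $2^j$ splits into two subarrays of size $2^{j-1}$, we obtain
\begin{equation*}
T(k) = 2\,T(k/2) + O(k), \qquad T(2) = O(1),
\end{equation*}
whose solution is $T(n)=O(n\log n)$. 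Equivalently, the subarrays occurring at any fixed recursion depth partition the original array, so the total work per level is $O(n)$, and because the sizes halve from $n$ down to $2$ there are only $O(\log n)$ levels.

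The main obstacle is step (i): a naive implementation that sorts each subarray in order to locate its median score would cost $O(k\log k)$ per call and degrade the overall bound to $O(n\log^2 n)$. To keep the per-call cost at $O(k)$ I would find the median score in worst-case linear time with a deterministic selection routine (for instance the median-of-medians algorithm) and then partition about it in a single linear scan, exactly as in the partition step of quickselect; ties among equal scores are broken by the existing order in $\mathbf{A}_{agg}$, which a stable scan preserves at no extra asymptotic cost. Combining the three parts, the total update time after receiving $\pi_{m+1}$ is $O(n\log n) + O(n) + O(n\log n) = O(n\log n)$, which is the claimed bound.
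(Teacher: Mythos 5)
Your proof is correct and follows essentially the same decomposition as the paper's: bound the LR-tree updates by \cref{thm:updatelrtree}, the root/score initialization by $O(n)$, and the recursive reordering by the recurrence $T(n)=2T(n/2)+O(n)$. The only difference is that you make explicit the need for a worst-case linear-time selection routine to find the median score, a detail the paper's proof asserts without elaboration.
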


\begin{proof}
    Updating the current  aggregated ranking $\omega_m$ after receiving the  new ranking $\pi_{m+1}$ involves two main steps. The first step is updating the 
set of LR-trees, which takes  $O(n\log n)$ time by  
\cref{thm:updatelrtree}. The second step is updating the current  aggregation array $\mathbf{A}_{agg}$. Before 
updating $\mathbf{A}_{agg}$, the roots of LR-trees and their scores are stored in   $\mathbf{A}_{agg}$.     Note that using the LRroots 
array, we can access the root of each LR-tree in $O(1)$, and their scores are also computed in $O(1)$ time. Hence,  before updating
the  aggregation array, we spend  $O(n\log n+n)=O(n\log n)$ time.

 We now bound the time for the second step.
 Finding the median and partitioning the array of size $O(n)$ takes $O(n)$ time. From the \cref{obs:score}, the scores of the current nodes can be
 computed using the  score of their parents in  $O(1)$ time. So, the total time taken by lines 4-11 in  \cref{alg:uprankaggarray} is  $O(n)$ resulting 
 in  $O(n)$ time before the two recursive calls of the algorithm. This leads to the following recurrence for the total running time of the second 
 step
\begin{equation}
T(n)=O(n)+2T(n/2)
\end{equation}
which  solves to $O(n\log n)$.
Therefore, the total time required to update the existing aggregated ranking is $O(n\log n)$.
\end{proof}
In ~\cref{sec:experiment} we will demonstrate the practical efficiency of the LR-Aggregation algorithm. 
First, however, we develop our \textit{Dynamic Rank Aggregation} algorithm, by showing how LR-Aggregation can be combined with Pick-A-Perm. 
\section{Dynamic Rank Aggregation Algorithm}\label{sec:pap}
In this section, we develop the \textit{Dynamic Rank Aggregation}  algorithm, 
which builds on the LR-Aggregation algorithm by utilizing the well-known Pick-A-Perm algorithm, 
in order to get a theoretical approximation guarantee without compromising on output quality or running time. 

Pick-A-Perm randomly selects one of the input rankings and is an expected
2-approximation for the Kemeny optimal ranking~\cite{ailon}.
We demonstrate that it achieves an expected 2-approximation for footrule optimal aggregation.
Combining the two algorithms 
requires showing: 1. Pick-A-Perm can be made to run in $O(n\log n)$ time in the dynamic setting, 2. The footrule cost of a candidate ranking, with respect to a domain, can be computed in $O(n\log n)$ time, 
so that the best between the outputs of the two algorithms can be chosen
without compromising on the asymptotic efficiency of the running time.

First, we observe that Pick-A-Perm gives an [expected] 2-approximation  for  Spearman's footrule distance. 
We will use the following result from~\cite{dwork1}.
\begin{proposition}\cite{dwork1}\label{thm:median}
Given complete rankings $\pi_1,\ldots, \pi_m$, if the median positions of the elements  form a permutation, this 
permutation is a footrule optimal aggregation.
\end{proposition}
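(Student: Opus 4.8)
The plan is to exploit the fact that Spearman's footrule cost of a candidate ranking against the domain \emph{separates} across elements, so that the constrained aggregation problem can be compared against an unconstrained, element-by-element relaxation.

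First I would rewrite the cost as $\mathcal{F}(\pi,\Pi)=\sum_{i=1}^{m}\sum_{e\in U}|\pi(e)-\pi_i(e)|=\sum_{e\in U}g_e(\pi(e))$, where $g_e(x):=\sum_{i=1}^{m}|x-\pi_i(e)|$ depends only on the position assigned to $e$. The key elementary fact I would invoke is that for fixed values $a_1,\ldots,a_m$, the map $x\mapsto\sum_{i=1}^{m}|x-a_i|$ attains its minimum exactly at any median of the $a_i$; consequently $g_e$ is minimized at the median position of $e$, which I denote $\mathrm{med}(e)$.

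Next I would relax the optimization. The footrule optimal aggregation minimizes $\sum_{e\in U}g_e(\pi(e))$ over all \emph{permutations} $\pi$. If instead we allow each $\pi(e)$ to be chosen independently in $\{1,\ldots,n\}$, dropping the bijectivity requirement, the objective decouples into $n$ independent minimizations whose joint optimum is $\sum_{e\in U}g_e(\mathrm{med}(e))$. Since every permutation is feasible for this relaxation, the relaxed optimum is a lower bound on the footrule cost of every permutation.

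Finally I would invoke the hypothesis: by assumption the assignment $\omega$ with $\omega(e)=\mathrm{med}(e)$ is itself a permutation. Hence $\omega$ is feasible for the constrained problem while attaining the relaxed lower bound $\sum_{e\in U}g_e(\mathrm{med}(e))$. A feasible point that meets a valid lower bound is optimal, so $\omega$ is a footrule optimal aggregation. There is no genuinely hard step here; the only subtlety worth flagging is the case of even $m$, where the minimizing median is non-unique. The argument is unaffected, since any position attaining the per-element minimum of $g_e$ is an admissible median: if some selection of medians forms a permutation, that permutation still attains the lower bound and is therefore optimal.
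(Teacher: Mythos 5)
Your proof is correct. Note that the paper itself offers no proof of this proposition --- it is stated as a citation to Dwork \emph{et al.} --- so there is no in-paper argument to compare against; your element-wise decomposition $\mathcal{F}(\pi,\Pi)=\sum_{e}g_e(\pi(e))$, the observation that each $g_e$ is minimized at a median, and the relaxation-meets-lower-bound conclusion constitute the standard proof of this fact, and your handling of the non-unique median for even $m$ is the right way to close the one loose end. This is also consistent with how the paper later uses the result: in the proof of the Pick-A-Perm approximation bound it invokes exactly the per-element quantity $C^*_e=\sum_i |a_t-a_i|$ with $a_t$ the median position, which is your $g_e(\mathrm{med}(e))$.
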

\begin{proposition}
[Pick-A-Perm is 2-approximation for footrule]\label{thm:pratio}
 Let  $\Pi=[\pi_1,\pi_2,\ldots, \pi_m]$ be a rank aggregation domain including  rankings over an element set $U$.  If 
 $\pi^*$ is the optimal footrule aggregation of $\Pi$ and $\pi$ is a random ranking uniformly selected from $\Pi$, then
 \begin{equation}
E[\mathcal{F}(\pi,\Pi)]\leq 2 \mathcal{F}(\pi^*,\Pi).
 \end{equation}
\end{proposition}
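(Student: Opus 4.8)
The plan is to unfold the expectation into an average of pairwise footrule distances and then lean on the fact that Spearman's footrule distance is a genuine metric, so that the triangle inequality is available. Since $\pi$ is drawn uniformly from $\Pi=[\pi_1,\ldots,\pi_m]$, I would first write, using the definition in \cref{eq:gspear},
\[
E[\mathcal{F}(\pi,\Pi)]=\frac{1}{m}\sum_{j=1}^{m}\mathcal{F}(\pi_j,\Pi)
=\frac{1}{m}\sum_{j=1}^{m}\sum_{i=1}^{m}\mathcal{F}(\pi_j,\pi_i).
\]
The whole proof then reduces to bounding this double sum by $2m\,\mathcal{F}(\pi^*,\Pi)$.

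The key step is to route each pairwise distance through the fixed reference ranking $\pi^*$ via the triangle inequality,
\[
\mathcal{F}(\pi_j,\pi_i)\le \mathcal{F}(\pi_j,\pi^*)+\mathcal{F}(\pi^*,\pi_i),
\]
and substitute this into the double sum. Splitting into two double sums, the summation over the free index in each contributes a factor of $m$: summing $\mathcal{F}(\pi_j,\pi^*)$ over $i$ and then over $j$ gives $m\,\mathcal{F}(\pi^*,\Pi)$ (using symmetry of $\mathcal{F}$ and \cref{eq:gspear}), and likewise summing $\mathcal{F}(\pi^*,\pi_i)$ over $j$ and then over $i$ gives another $m\,\mathcal{F}(\pi^*,\Pi)$. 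Hence the double sum is at most $2m\,\mathcal{F}(\pi^*,\Pi)$, and dividing by $m$ yields the claimed bound.

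The only point that genuinely needs justification is the triangle inequality, i.e.\ that $\mathcal{F}$ is a metric on rankings. This is immediate once we observe that $\mathcal{F}(\pi,\pi')=\|v_\pi-v_{\pi'}\|_1$, where $v_\pi=(\pi(e))_{e\in U}$ is the position vector of $\pi$; the triangle inequality for $\mathcal{F}$ is then inherited from that of the $\ell_1$ norm. I expect no real obstacle here, since the argument is the standard Pick-A-Perm averaging bound. It is worth remarking that the optimality of $\pi^*$ (and in particular \cref{thm:median}) is not used in deriving the inequality: the same bound $E[\mathcal{F}(\pi,\Pi)]\le 2\,\mathcal{F}(\sigma,\Pi)$ holds for \emph{any} fixed ranking $\sigma$, and specializing to the optimal $\sigma=\pi^*$ simply gives the sharpest statement of the guarantee.
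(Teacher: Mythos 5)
Your proof is correct, but it takes a genuinely different route from the paper's. You use the classical Pick-A-Perm averaging argument: expand $E[\mathcal{F}(\pi,\Pi)]$ as $\frac{1}{m}\sum_{j}\sum_{i}\mathcal{F}(\pi_j,\pi_i)$ and route each pairwise term through $\pi^*$ via the triangle inequality, which holds because $\mathcal{F}$ is the $\ell_1$ distance between position vectors. The paper instead argues element by element: for a fixed element $e$ with sorted positions $a_1\le\cdots\le a_m$, it invokes \cref{thm:median} to identify the minimum per-element cost $C^*_e$ (attained at the median position), writes both $C^*_e$ and the expected cost $E[C_e]$ as weighted sums of the gaps $a_{i+1}-a_i$ (with weights roughly $\min(i,m-i)$ versus $2i(m-i)/m$), and compares the weights termwise to get $E[C_e]\le 2C^*_e$, then sums over $e$. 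Your argument is shorter, needs no structural facts about footrule-optimal aggregations, and---as you note---bounds $E[\mathcal{F}(\pi,\Pi)]$ against $2\mathcal{F}(\sigma,\Pi)$ for \emph{any} fixed $\sigma$; since it only uses the metric axioms, it also transfers verbatim to Kendall-tau and any other metric on rankings. The paper's per-element analysis buys something slightly different: it bounds the expectation against $2\sum_e C^*_e$, where $\sum_e C^*_e$ is the ``independent medians'' lower bound on $\mathcal{F}(\pi^*,\Pi)$ (the two coincide exactly when the medians form a permutation, by \cref{thm:median}), so it is in that sense a marginally sharper guarantee and ties the approximation analysis to the same median structure the paper exploits elsewhere. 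Both arguments are sound; the only point you must (and do) justify is the triangle inequality for $\mathcal{F}$.
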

\begin{proof}\label{thm:proofratio}
Let $\pi$ be selected from $\Pi$ uniformly at random.
Consider an arbitrary element  $e$ and let its positions in the rankings  $\pi_1,\pi_2,\ldots \pi_m$ be  $a_1,a_2,\ldots,a_m$. 
Without loss of generality, assume  $a_1\leq a_2\leq \ldots\leq a_m$. 
By \cref{thm:median}, the minimum cost $C^*_e$ incurred by an element $e$ 
to the domain occurs when $e$ is placed at $a_t=a_{\lfloor m/2\rfloor}$. Hence,
\begin{small}
  \begin{align*}
C^*_e=\sum_{i=1}^{m}|a_t-a_i|&=\sum_{i=1}^{t-1}(a_t-a_i)+\sum_{i=t+1}^{m}(a_i-a_t)\\
&=\sum_{i=1}^{t-1}i*(a_{i+1}-a_i)+\sum_{i=t}^{m-1}(m-i)*(a_{i+1}-a_i)
=\sum_{i=1}^{m-1}\gamma_i^*(a_{i+1}-a_i),
\end{align*}  
\end{small}
where  $\gamma^*_i=i$ for $i<t$  and $\gamma^*_i=(m-i)$ for $i> t$.
Let $C^i_e$ be the total distance of $e$ to the domain  
when $e$ is at position $a_i$, 
and let $C_e$ be $e$'s cost in $\pi$.  Then 
\begin{align}
E[C_e]&=\sum_{i=1}^{m}(C^i_e) Pr(\text{$e$ be at position $a_i$})=\sum_{i=1}^{m}\dfrac{1}{m}(C^i_e)\\
&=\sum_{i=1}^{m}\frac{1}{m}[\sum_{j=1}^{i-1}(a_i-a_j)+\sum_{j=i+1}^{m}(a_j-a_i)]\label{re2}\\
&=\sum_{i=1}^{m-1}\frac{1}{m}\gamma_i(a_{i+1}-a_i)
\end{align}
To derive a relation between $C^*_e$ and $E[C_e]$, we show that $\gamma_i\leq 2mi$ for $i<t$ and $\gamma_i\leq 2m(m-i)$ for $i> t$. 
Consider a term of $(a_k-a_{k^\prime})$ in \cref{re2}. 
For each pair $(k, k')$, where $k > k'$, the term $(a_k - a_{k'})$ appears twice in  \cref{re2}. One occurrence corresponds to  
$C^k_e$ and the other corresponds to $C^{k^\prime}_e$.
The expression $(a_k - a_{k'})$ can be alternatively written as 
$(a_k-a_{k^\prime})=\sum_{i=k^\prime}^{k-1}(a_{i+1}-a_{i})$.
For every  $i$, the smaller terms in the form of $(a_{i+1}-a_{i})$  contribute to the decomposition of $(a_k-a_{k^\prime})$ 
for pairs $(k,k^\prime)$ where $k>i$ and $k^\prime \leq i$. The number of such pairs equals  $i\cdot(m-i)$. Therefore, for each 
$i,$ the number of terms in the form $(a_{i+1}-a_{i})$ in \cref{re2}, or equivalently $\gamma_i$, equals $2i(m-i)$. Hence,
\begin{align*}
E[C_e]&=\sum_{i=1}^{t-1}\frac{1}{m}\gamma_i(a_{i+1}-a_i)+\sum_{i=t}^{m-1}\frac{1}{m}\gamma_i(a_{i+1}-a_i)\\
&\leq \sum_{i=1}^{t-1}2i(a_{i+1}-a_i)+\sum_{i=t}^{m-1}2(m-i)(a_{i+1}-a_i)=2C^*_e
\end{align*}
So, for a uniformly selected random ranking $\pi$ we have,
\begin{equation}
E[F(\pi,\Pi)]=E[\sum_{e\in U}C_e]=\sum_{e\in U}E[C_e]\leq \sum_{e\in U}2C^*_e=2\sum_{e\in U}C^*_e=2\mathcal{F}
(\pi^*,\Pi)
\end{equation}
\end{proof}
\begin{theorem}
The Dynamic Rank Aggregation algorithm (\cref{alg:best_of}), 
which runs both Pick-A-Perm and  LR-Aggregation and selects  the best solution,
is an expected 2-approximation  for optimal footrule aggregation.
\end{theorem}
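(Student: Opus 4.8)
The plan is to exploit the fact that the Dynamic Rank Aggregation algorithm (\cref{alg:best_of}) returns the \emph{better} of two candidate rankings, so that its footrule cost can never exceed the cost of the Pick-A-Perm candidate. The expected 2-approximation then follows immediately from \cref{thm:pratio}, with essentially no additional work.

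To set this up, I would let $\omega_{LR}$ denote the (deterministic) output of the LR-Aggregation algorithm and let $\pi_{PAP}$ denote the output of Pick-A-Perm, which is a ranking chosen uniformly at random from $\Pi$. The Dynamic Rank Aggregation algorithm outputs the ranking $\omega$ minimizing the footrule cost to the domain among these two candidates, so that $\mathcal{F}(\omega,\Pi)=\min\{\mathcal{F}(\omega_{LR},\Pi),\,\mathcal{F}(\pi_{PAP},\Pi)\}$. The key structural point I would emphasize is that the only source of randomness is the choice made by Pick-A-Perm; LR-Aggregation is deterministic, so $\mathcal{F}(\omega_{LR},\Pi)$ is a fixed quantity, and the expectation is taken solely over the uniform choice in Pick-A-Perm.

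The core of the argument is a one-line domination. First I would observe that, for every realization of the random choice in Pick-A-Perm,
\begin{equation*}
\mathcal{F}(\omega,\Pi)=\min\{\mathcal{F}(\omega_{LR},\Pi),\,\mathcal{F}(\pi_{PAP},\Pi)\}\leq \mathcal{F}(\pi_{PAP},\Pi).
\end{equation*}
Taking expectations over the randomness of Pick-A-Perm and using monotonicity of expectation gives $E[\mathcal{F}(\omega,\Pi)]\leq E[\mathcal{F}(\pi_{PAP},\Pi)]$. I would then invoke \cref{thm:pratio}, which bounds the right-hand side by $2\,\mathcal{F}(\pi^*,\Pi)$, where $\pi^*$ is the optimal footrule aggregation of $\Pi$. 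Chaining these two inequalities yields $E[\mathcal{F}(\omega,\Pi)]\leq 2\,\mathcal{F}(\pi^*,\Pi)$, which is exactly the claimed expected 2-approximation.

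There is no genuine analytic obstacle here, since the heavy lifting has already been done in \cref{thm:pratio}; the only points requiring care are conceptual. The first is to confirm that LR-Aggregation introduces no additional randomness, so that \cref{thm:pratio} applies verbatim to the expectation over Pick-A-Perm alone. The second is to note that taking the pointwise minimum of the random Pick-A-Perm cost and the fixed LR-Aggregation cost can only decrease the expected value, so that adjoining the LR-Aggregation candidate preserves the 2-approximation guarantee rather than weakening it—and, as the experiments indicate, typically improves it substantially in practice.
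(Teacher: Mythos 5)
Your proposal is correct and is exactly the argument the paper intends: the paper states this theorem without an explicit proof, treating it as an immediate consequence of \cref{thm:pratio} together with the fact that \cref{alg:best_of} returns the candidate with the smaller footrule cost, which is precisely the pointwise domination $\mathcal{F}(\omega,\Pi)\leq\mathcal{F}(\pi_{PAP},\Pi)$ you use. Your write-up simply makes explicit the one-line reasoning the paper leaves implicit, so there is nothing further to add.
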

 We now show the above-mentioned algorithm takes $O(n \log n)$ time in the dynamic setting. 
The first challenge can be dealt with using reservoir sampling. 
Suppose after $i-1$ iterations, we have a random sample $\tau$ of the first $i-1$ input rankings. 
After receiving $\pi_{i}$, we keep $\tau$ with probability $\frac{i-1}{i}$, and update $\tau$ to be $\pi_i$ with probability $\frac{1}{i}$. 
This ensures that after iteration $i$, $\tau$ is a uniformly random selection from $\pi_1, \ldots, \pi_i$. 

Selecting  the better of two solutions with respect to the footrule distance in $O(n\log n)$ time is more involved. 
We describe how we can build a data structure that would allow achieving this. 

\subsection{The Rank Tree Data Structure} 
Recall that for a rank aggregation domain $\Pi=[\pi_1,\pi_2,\ldots, \pi_m]$ and  a complete ranking  $\pi$, each of size $n$, the Spearman's footrule distance $\mathcal{F}(\pi,\Pi)$ is defined as,   
\begin{align}\label{re:footrule}
\mathcal{F}(\pi,\Pi)&=\sum_{i=1}^{m}\mathcal{F}(\pi,\pi_i)=\sum_{i=1}^{m}\sum_{e\in U}|\pi(e)-\pi_i(e)|.
\end{align}
Even if we could afford to store all the rankings, 
a brute-force computation of $\mathcal{F}(\pi,\Pi)$ based on  \cref{re:footrule} takes $\Theta(mn)$ time. Note that although we can compute the exact value of $\mathcal{F}(\pi,\Pi)$ for a given $\pi$, we cannot provide the optimal ranking $\pi$ with minimum value for $\mathcal{F}(\pi,\Pi)$. Therefore, we use this method only to compare the costs of the rankings provided by the  LR-Aggregation and the Pick-A-Perm algorithms and choose the best one.

Here, we present a method for computing $\mathcal{F}(\pi,\Pi)$ in $O(n\log n)$ time and space. 
The last equality in \cref{re:footrule} can be rewritten as,
\begin{align}\label{re:foot}
\sum_{e\in U}\sum_{i=1}^{m}|\pi(e)-\pi_i(e)|
\end{align}
So, we can streamline the computation of $\mathcal{F}(\pi,\Pi)$ by computing the cumulative cost imposed by each element 
across all rankings in $\Pi$. 
Building on this insight, we develop the \textit{Rank Tree} data structure, 
which enables the computation of $\mathcal{F}(\pi,\Pi)$ in $O(n\log n)$ time.
Let $X_e=\{i: \pi_i(e)>\pi(e)\}$ and $Y_e=\{i: \pi_i(e)<\pi(e)\}$.  According to \cref{re:foot},  $\mathcal{F}(\pi,\Pi)=\sum_{e\in U}C(e)$, in which 
\begin{equation}
\begin{aligned}\label{rel:foot}
C(e)=\sum_{i=1}^{m}|\pi(e)-\pi_i(e)|&=\sum_{i\in X_e}(\pi_i(e)-\pi(e))+\sum_{i\in Y_e}(\pi(e)-\pi_i(e))\\&=\sum_{i\in X_e}\pi_i(e)-\sum_{i\in Y_e}\pi_i(e)+\pi(e)(|Y_e|-|X_e|)
\end{aligned}
\end{equation}
The Rank Tree data structure,  denoted by $\mathcal{R}(e)$, is constructed for each element $e$ to efficiently compute the terms in the last equality above. 
In this structure, a balanced binary search tree  is built on the numbers $1,2,\ldots, n$. 
Each node $u$  in $\mathcal{R}(e)$ is associated with  three attributes:  
\textit{repetition}, \textit{subtree size}, and \textit{subtree sum}, denoted as $re(u)$, $size(u)$, and $sum(u)$, respectively.

The Rank Tree is initialized with all node  attributes set to zero.
Upon receiving a new ranking $\pi_i$, the corresponding $\mathcal{R}(e)$ for each element $e$ 
is updated by traversing from the root to the node labeled with $\pi_i(e)$, 
which represents the position of $e$ in $\pi_i$, incrementing the $size$ attribute by 1 and the $sum$ attribute by $\pi_i(e)$ 
 at each node along the path. 
Additionally, the $re$ attribute of the target node is incremented by 1.
In other words, each time an element appears at a new position $j$ (after receiving a new input ranking),  
 the repetition attribute of the existing node for $j$ is increased. 
Subsequently,  the size and sum attributes for this node and other affected nodes in the tree are updated accordingly. 
This structure enables efficient computation of the cumulative cost of each element $e$ over all received rankings.
 \cref{fig:3} and \cref{fig:4} illustrate, as an example for  $\Pi=[\pi_1,\pi_2]$,   the updated rank trees of elements $A,C$, and $D$, after receiving the  rankings $\pi_1=\prec A, B, C, D\succ$ and  $\pi_2=\prec B,D,A,C\succ$, respectively.  
 Note that the properties $re, size$, and $sum$ are shown in  rectangles close to each node from left to right.
\begin{figure}[H]
\centering
\begin{subfigure}{.3\textwidth}
  \centering
  \includegraphics[width=.9\linewidth]{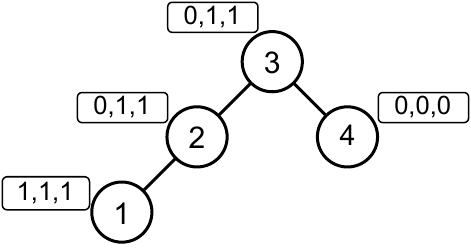} 
  \caption*{$\mathcal{R}(A)$}
\end{subfigure}%
\begin{subfigure}{.3\textwidth}
  \centering
  \includegraphics[width=.9\linewidth]{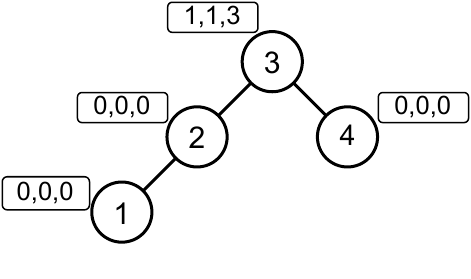} 
  \caption*{$\mathcal{R}(C)$}
\end{subfigure}%
\begin{subfigure}{.3\textwidth}
  \centering
  \includegraphics[width=.9\linewidth]{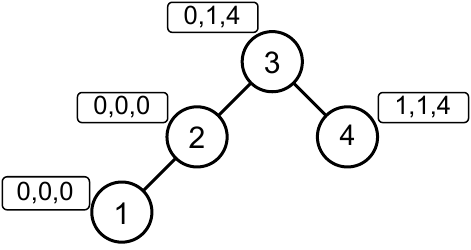} 
  \caption*{$\mathcal{R}(D)$}
\end{subfigure}
\caption{The rank trees of elements $A$, $C$, and $D$ correspond to  $\Pi=[\prec A, B, C, D\succ]$}
\label{fig:3}
\end{figure}

\begin{figure}[H]
\centering
\begin{subfigure}{.3\textwidth}
  \centering
  \includegraphics[width=.9\linewidth]{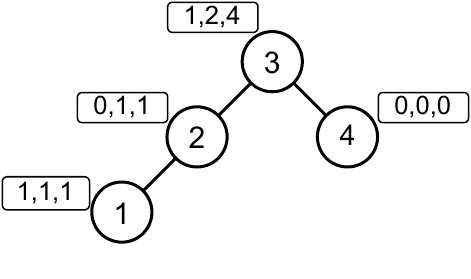} 
  \caption*{$\mathcal{R}(A)$}
\end{subfigure}%
\begin{subfigure}{.3\textwidth}
  \centering
  \includegraphics[width=.9\linewidth]{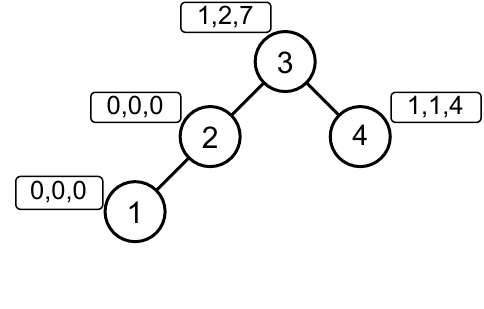} 
  \caption*{$\mathcal{R}(C)$}
\end{subfigure}%
\begin{subfigure}{.3\textwidth}
  \centering
  \includegraphics[width=.9\linewidth]{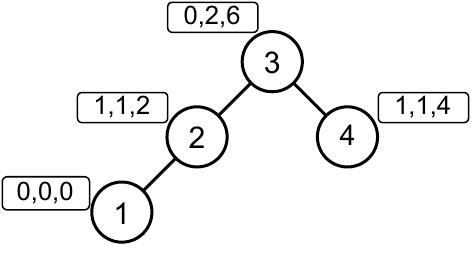} 
  \caption*{$\mathcal{R}(D)$}
\end{subfigure}
\caption{The rank trees of elements $A$, $C$, and $D$ correspond to  $\Pi=[\prec A, B, C, D\succ, \prec B,D,A,C\succ]$}
\label{fig:4}
\end{figure}
To update the set of rank trees  after receiving a ranking $\pi=\prec e_1,\ldots,e_n\succ$, we execute  \cref{alg:updateranktree} for each element $e_i$.  
\begin{algorithm}[H]
\caption{\textsc{Update Rank Tree}$(v, j)$}
\label{alg:updateranktree}
\begin{algorithmic}[1]
\Require Root $v$ of a rank tree, and an integer $j$ (representing the rank of the corresponding element)
\While{$v.value \ne j$}
    \State $size(v) \gets size(v) + 1$
    \State $sum(v) \gets sum(v) + j$
    \If{$v.value < j$}
        \State $v \gets rc(v)$
    \Else
        \State $v \gets lc(v)$
    \EndIf
\EndWhile
\State $re(v) \gets re(v) + 1$
\State $size(v) \gets size(v) + 1$
\State $sum(v) \gets sum(v) + j$
\end{algorithmic}
\end{algorithm}
\begin{theorem}\label{thm:updateranktree}
The set of  rank trees data structure for a rank aggregation domain $\Pi = [\pi_1, \ldots, \pi_m] $ over a set of $n$ elements uses $O(n^2)$ storage, and it is updated in $O(n \log n)$ time after receiving a  ranking $\pi$. 
\end{theorem}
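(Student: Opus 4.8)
The plan is to mirror the structure of the proof of \cref{thm:updatelrtree}, bounding the space and the per-ranking update time separately.

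For the space bound, I would first observe that each rank tree $\mathcal{R}(e)$ is a balanced binary search tree built on the fixed key set $\{1, 2, \ldots, n\}$, so it has exactly $n$ nodes, independent of the domain $\Pi$ and of $m$. Each node stores only the three constant-size attributes $re(u)$, $size(u)$, and $sum(u)$ (under the Real RAM assumption noted in the footnote to \cref{thm:updatelrtree}), so each tree occupies $O(n)$ space. Since there is one such tree per element and there are $n$ elements, the total storage is $O(n \cdot n) = O(n^2)$.

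For the update-time bound, I would analyze a single invocation of \cref{alg:updateranktree} and then multiply by $n$. Upon receiving $\pi = \prec e_1, \ldots, e_n \succ$, the algorithm is called once per element $e_i$ with target key $j = \pi(e_i)$. Each call walks a single root-to-node path: at every step it performs $O(1)$ work (incrementing $size$ and $sum$) and descends to a child according to the comparison between $v.value$ and $j$, terminating at the node labeled $j$, where $re$ is also incremented. Because the tree is balanced, this path has length $O(\log n)$, so each call costs $O(\log n)$. Assuming $O(1)$-time access to the root of each $\mathcal{R}(e)$ (e.g.\ via a roots array analogous to $\mathbf{R}_{LR}$), the total cost of updating all $n$ trees after one ranking is $O(n \log n)$.

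The argument is essentially bookkeeping, and I do not anticipate a genuine obstacle. The one point that warrants care is verifying that a single target-key traversal suffices to keep every $size$ and $sum$ attribute consistent. This holds because inserting a new occurrence of key $j$ changes the subtree size and subtree sum of exactly those nodes whose subtree contains $j$ — precisely the nodes on the root-to-$j$ path — and changes the repetition count of only the node labeled $j$. Since \cref{alg:updateranktree} updates exactly these nodes, correctness and the $O(\log n)$ per-tree bound are established simultaneously, yielding the claimed $O(n \log n)$ update time.
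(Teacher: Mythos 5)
Your proposal is correct and matches the intended argument: the paper states \cref{thm:updateranktree} without an explicit proof, evidently because it is the direct analogue of \cref{thm:updatelrtree}, whose proof proceeds exactly as you do ($n$ nodes of $O(1)$ size per tree and $n$ trees for the space bound; one $O(\log n)$ root-to-target traversal per element, with $O(1)$ root access via the roots array $\mathbf{R}_{rank}$, for the time bound). Your added remark that the root-to-$j$ path is precisely the set of nodes whose subtree sums and sizes change is a correct and worthwhile consistency check that the paper leaves implicit.
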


\begin{theorem}\label{thm:cost}
Given a rank aggregation domain $\Pi = [\pi_1, . . . , \pi_m]$ over $n$ elements, the rank tree data structure enables the computation of $\mathcal{F}(\pi,\Pi)$ for a given  ranking $\pi$ in $O(n \log n)$ time. 
\end{theorem}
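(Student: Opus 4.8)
The plan is to reduce the computation of $\mathcal{F}(\pi,\Pi)=\sum_{e\in U}C(e)$ to a collection of independent per-element queries, one on each rank tree $\mathcal{R}(e)$, and to show that each such query runs in $O(\log n)$ time. By \cref{rel:foot}, computing $C(e)$ only requires the four scalar quantities $|X_e|$, $|Y_e|$, $\sum_{i\in X_e}\pi_i(e)$, and $\sum_{i\in Y_e}\pi_i(e)$, together with the value $\pi(e)$, which we read directly from $\pi$. So it suffices to extract these four aggregates from $\mathcal{R}(e)$ using the query value $k:=\pi(e)$ and then combine them with $O(1)$ arithmetic.

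First I would make precise the semantics of the three node attributes, which follow by induction on the number of processed rankings from the update rule of \cref{alg:updateranktree}: for every node $u$ of $\mathcal{R}(e)$, the attribute $re(u)$ equals the number of rankings $\pi_i$ with $\pi_i(e)=u.\mathit{value}$, while $size(u)$ and $sum(u)$ equal, respectively, the number of appearances and the sum of positions $\pi_i(e)$ over all $\pi_i$ whose position of $e$ lies in the subtree rooted at $u$. This is exactly the invariant maintained by \cref{alg:updateranktree}, since the search path to $\pi_i(e)$ visits a node $u$ if and only if $\pi_i(e)$ lies in the subtree of $u$, and the algorithm increments $size$ and $sum$ precisely along that path (and $re$ at the endpoint).

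Given this invariant, the four aggregates are obtained by a single root-to-$k$ traversal of $\mathcal{R}(e)$ of length $O(\log n)$. Maintaining running totals for the ``$<k$'' and ``$>k$'' sides, at each visited node $v$ I would branch on the comparison of $k$ with $v.\mathit{value}$: if $k>v.\mathit{value}$, then $v$ together with its entire left subtree holds positions strictly below $k$, so I add $size(lc(v))+re(v)$ to $|Y_e|$ and $sum(lc(v))+re(v)\cdot v.\mathit{value}$ to $\sum_{i\in Y_e}\pi_i(e)$, then descend right; symmetrically, if $k<v.\mathit{value}$ I credit $v$ and its right subtree to the $X_e$ totals and descend left; and when $k=v.\mathit{value}$ I add the left-subtree aggregates to the $Y_e$ totals and the right-subtree aggregates to the $X_e$ totals and stop, the $re(v)$ appearances at position exactly $k$ contributing nothing to the footrule cost. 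The correctness argument is that every appearance of $e$ sits in exactly one off-path subtree (or on the path endpoint), and the BST ordering guarantees it is classified on the correct side of $k$; hence each appearance is counted once with the right sign, and substituting into \cref{rel:foot} yields $C(e)$.

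The running-time bound is then immediate: each rank tree is balanced, so its root-to-$k$ path has length $O(\log n)$ and each node is processed in $O(1)$ using the attributes stored at its children, giving $O(\log n)$ per element and $O(n\log n)$ over all $n$ elements. Since the query only reads attributes and never modifies the trees, it can be invoked separately on the candidate rankings produced by LR-Aggregation and by Pick-A-Perm so that the cheaper one is selected. The only delicate part is the bookkeeping in the traversal: correctly attributing each off-path subtree and the on-path node to the $<k$ or $>k$ side, and remembering to weight the node's own $re(v)$ repetitions by its value. This is a routine augmented-BST range query, so I expect it to pose no real difficulty beyond careful case analysis.
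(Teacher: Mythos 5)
Your proposal is correct and matches the paper's own argument: the paper's \textsc{Cost} procedure (\cref{alg:cost}) performs exactly the root-to-$\pi(e)$ traversal you describe, accumulating $\sum_{i\in X_e}\pi_i(e)$, $\sum_{i\in Y_e}\pi_i(e)$, $|X_e|$, and $|Y_e|$ from the off-path subtree aggregates and the on-path $re(v)\cdot v.\mathit{value}$ contributions, then combines them via \cref{rel:foot} in $O(\log n)$ per element and $O(n\log n)$ overall. Your additional justification of the attribute invariant maintained by \cref{alg:updateranktree} is a welcome elaboration but does not change the route.
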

\begin{proof}
For each element $e$ in $\pi$,  the cost contribution of $e$ in $\mathcal{F}(\pi,\Pi)$ is computed using \cref{alg:cost} in $O(\log n)$ time. Note that this algorithm, during its traversal from the root of $\mathcal{R}(e)$ to the node with value $\pi(e)$, computes the values of terms $\sum_{i\in X_e}\pi_i(e)$,  $\sum_{i\in Y_e}\pi_i(e)$, $|Y_e|$, and $|X_e|$ of \cref{rel:foot}. These terms are denoted in \cref{alg:cost} by $c_{11}$, $c_{21}$, $c_{22}$, and $c_{12}$, respectively. Thus, the total time for computing $\mathcal{F}(\pi,\Pi)$ is $O(n\log n)$. 
\end{proof}

\begin{algorithm}[H]
\caption{\textsc{Cost}$(v, j)$} 
\label{alg:cost}
\begin{algorithmic}[1]
\Require Root $v$ of a rank tree, and integer $j$ (representing the rank of the corresponding element) 
\Ensure Cost contribution of the element with rank $j$
\State $c_{11}, c_{12}, c_{21}, c_{22} \gets 0$
\While{$v.value \ne j$}
    \If{$v.value < j$}
        \State $c_{21} \gets c_{21} + re(v) * v.value + sum(lc(v))$
        \State $c_{22} \gets c_{22} + re(v) + size(lc(v))$
        \State $v \gets rc(v)$
    \Else
        \State $c_{11} \gets c_{11} + re(v)* v.value + sum(rc(v))$
        \State $c_{12} \gets c_{12} + re(v) + size(rc(v))$
        \State $v \gets lc(v)$
    \EndIf
\EndWhile
\State $c_{11} \gets c_{11} + sum(rc(v))$
\State $c_{12} \gets c_{12} + size(rc(v))$
\State $c_{21} \gets c_{21} + sum(lc(v))$
\State $c_{22} \gets c_{22} + size(lc(v))$
\State\Return $(c_{11} - c_{21}) + j * (c_{22} - c_{12})$
\end{algorithmic}
\end{algorithm}
\cref{alg:dynamic_pap} introduces a dynamic method to pick a uniform random ranking, and \cref{alg:best_of} combines the Pick-A-Perm and LR-Aggregation methods.
Let us consider $\omega_{\scriptscriptstyle LR}$ and  $\omega_{\scriptscriptstyle PAP}$ as the aggregations obtained by  LR-Aggregation and dynamic Pick-A-Perm, respectively. To efficiently determine the best of LR-Aggregation and Pick-A-Perm, we employ \cref{alg:footdist} to compute the Spearman's footrule distances $\mathcal{F}(\omega_{\scriptscriptstyle LR},\Pi)$ and $\mathcal{F}(\omega_{\scriptscriptstyle PAP},\Pi)$. The aggregation that minimizes this distance is then returned as aggregated ranking. This algorithm  takes as input the array \textit{Rankroots} $\mathbf{R}_{rank}$, which stores pointers to the roots of rank trees, enabling efficient access.
It also uses \cref{alg:cost} as a sub-routine to compute the cost contribution of each element.
\begin{algorithm}[H]
\caption{\textsc{Spearman Footrule Distance}$(\pi, \mathbf{R}_{rank})$}
\label{alg:footdist}
\begin{algorithmic}[1]
\Require Ranking $\pi$, and Rankroots array $\mathbf{R}_{rank}$
\Ensure The Spearman's footrule distance between $\pi$ and $\Pi$
\State $d \gets 0$
\State $n \gets$ size of $\pi$
\For{$j \gets 1$ to $n$}
    \State $e \gets$ $j$-th element of $\pi$ \Comment{element with rank $j$ in  $\pi$}
    \State $d \gets d + \textsc{Cost}(\mathbf{R}_{rank}[e], j)$
\EndFor
\State\Return $d$
\end{algorithmic}
\end{algorithm}

\begin{algorithm}[H]
\caption{\textsc{Dynamic Pick-A-Perm}$(\omega_{\scriptscriptstyle PAP}, \pi, m)$}
\label{alg:dynamic_pap}
\begin{algorithmic}[1]
\Require Aggregated ranking $\omega_{\scriptscriptstyle PAP}$ of $\{\pi_1, \ldots, \pi_{m-1}\}$,  ranking $\pi$, total number of rankings $m$
\Ensure A uniformly selected ranking from $\{\pi_1, \ldots, \pi_m =\pi\}$
\State $p \gets$ uniformly random integer from $1$ to $m$
\If{$p = m$}
    \State $\omega_{\scriptscriptstyle PAP} \gets \pi$
\EndIf
\State\Return $\omega_{\scriptscriptstyle PAP}$
\end{algorithmic}
\end{algorithm}

\begin{algorithm}[H]
\caption{\textsc{Dynamic Rank Aggregation}$(\omega_{\scriptscriptstyle PAP}, \omega_{\scriptscriptstyle LR}, \pi, m, \mathbf{R}_{LR}, \mathbf{R}_{rank})$} 
\label{alg:best_of}
\begin{algorithmic}[1]
\Require Pick-A-Perm aggregated ranking $\omega_{\scriptscriptstyle PAP}$, LR-aggregated ranking $\omega_{\scriptscriptstyle LR}$, ranking $\pi$, number of rankings $m$, LRroots array $\mathbf{R}_{LR}$, and Rankroots array $\mathbf{R}_{rank}$
\Ensure The better of $\omega_{\scriptscriptstyle PAP}$ and $\omega_{\scriptscriptstyle LR}$ based on Spearman's footrule distance

\State $\omega_{\scriptscriptstyle LR} \gets$ \textsc{LR-Aggregation}$(\omega_{\scriptscriptstyle LR}, \pi, \mathbf{R}_{LR})$
\State $n\gets$ size of ranking $\pi$
\For{$j \gets 1$ to $n$}
    \State $e \gets$ $j$-th element of $\pi$
    \State $v \gets \mathbf{R}_{rank}[e]$
    \State \textsc{Update Rank Tree}$(v, j)$
\EndFor
\State $\omega_{\scriptscriptstyle PAP} \gets$ \textsc{Dynamic Pick-A-Perm}$(\omega_{\scriptscriptstyle PAP}, \pi, m)$
\State $score_{\scriptscriptstyle LR} \gets$ \textsc{Spearman Footrule Distance}$(\omega_{\scriptscriptstyle LR}, \mathbf{R}_{rank})$
\State $score_{\scriptscriptstyle PAP} \gets$ \textsc{Spearman Footrule Distance}$(\omega_{\scriptscriptstyle PAP}, \mathbf{R}_{rank})$
\If{$score_{\scriptscriptstyle LR} < score_{\scriptscriptstyle PAP}$}
    \State $\omega_{best} \gets \omega_{\scriptscriptstyle LR}$
\Else
    \State $\omega_{best} \gets \omega_{\scriptscriptstyle PAP}$
\EndIf
\State\Return $\omega_{best}$
\end{algorithmic}
\end{algorithm}
\section{Experimental Results}\label{sec:experiment}
This section evaluates the quality of the solutions produced by  the  LR-Aggregation algorithm, 
using experiments on synthetic and real-world datasets. 
We compare the algorithm against the optimal footrule aggregation, the Pick-A-Perm, and the Best-Input-Ranking methods.  

For better comparison, we computed three key ratios:
\begin{align*}
\alpha_{\scriptscriptstyle LR}=\frac{\mathcal{F}(\pi,\Pi)}{\mathcal{F}(\pi^*,\Pi)}, \quad
\alpha_{\scriptscriptstyle PAP}=\frac{\mathcal{F}(\pi^\prime,\Pi)}{\mathcal{F}(\pi^*, \Pi)},\quad
\alpha_{\scriptscriptstyle BIR}=\frac{\mathcal{F}(\pi^{\prime\prime},\Pi)}{\mathcal{F}(\pi^*,\Pi)}  
\end{align*}
where $\pi^*,\pi$, $\pi^\prime$ and $\pi''$ denote the optimal footrule aggregation, LR-Aggregation, Pick-A-Perm aggregation,  and the output of the Best-Input-Ranking--which selects the input ranking with the smallest footrule distance to the domain--respectively.
These ratios provide insight into how closely LR-Aggregation, Pick-A-Perm aggregation, and Best Input Ranking algorithm approximate the optimal aggregation. 
\subsection{Experiments on Random Datasets}
We evaluate the proposed method on synthetic rank aggregation domains,  generated using three models: (i) uniform random sampling,  
(ii) biased domain (described  below), and (iii) the Mallows model~\cite{mallows}. 
Each domain $\Pi$ consists of $m \in \{1000, 2000, \dots, 10000\}$ rankings,  each of size $n \in \{64, 128, 256, 512\}$. 
Across all experiments, we evaluated the final aggregated ranking obtained after receiving the $m$ input rankings. 
We use \cref{alg:dyrankagg} such that  the LR-trees contain the presence information of each element across the entire domain.

In the  uniform  random domain, each  ranking was sampled uniformly from all permutations of  $n$ elements; \crefrange{tab:64}{tab:512} present the experimental results for this domain. 
 \begin{table}[H]
\centering
\captionsetup{justification=raggedright,singlelinecheck=false,position=top}
\caption{Comparing aggregation methods on uniform domain, n=64}\label{tab:64}
\begin{adjustbox}{width=1\textwidth}
\small
\begin{tabular}{lllllllllll}
\toprule
$m$ & 1000 & 2000 & 3000 & 4000 & 5000 & 6000 & 7000 & 8000 & 9000 & 10000 \\
\midrule
Optimal aggregation &1333458 &2686850 &4048560 &5401060 &6756788 &8115854 &9483262 &10841720 &12197208 &13562750
 \\
LR-Aggregation  
&1334612 &2689444 &4051688 &5403818 &6758900 &8117388 &9486436 &10843846 &12199950 &13565574
\\

Pick-A-Perm &1366034 &2722980 &4085670 &5456432 &6833688 &8178098 &9549274 &10925314 &12269986 &13629332
\\

\midrule
$\alpha_{LR}$ &1.001 &1.001 &1.001 &1.001 &1.0 &1.0 &1.0 &1.0 &1.0 &1.0
  \\
$\alpha_{\scriptscriptstyle PAP}$ &1.024 &1.013 &1.009 &1.01 &1.011 &1.008 &1.007 &1.008 &1.006 &1.005
  \\
\bottomrule
\end{tabular}
\end{adjustbox}
\end{table}
\begin{table}[H]
\centering
\captionsetup{justification=raggedright,singlelinecheck=false,position=top}
\caption{Comparing aggregation methods on uniform domain, n=128}\label{tab:128}
\small
\resizebox{\textwidth}{!}{%
\begin{tabular}{lllllllllll}
\toprule
$m$ & 1000 & 2000 & 3000 & 4000 & 5000 & 6000 & 7000 & 8000 & 9000 & 10000 \\
\midrule
Optimal aggregation &5336204 &10724820 &16172742 &21597540 &27020404 &32461084 &37909314 &43356058 &48775678 &54226688
 \\
LR-Aggregation &5339892 &10729330 &16181194 &21604648 &27026330 &32467780 &37927022 &43373450 &48785666 &54244908  \\
Pick-A-Perm &5449568 &10926136 &16410910 &21862770 &27289842 &32781490 &38194364 &43664684 &49132498 &54583086
  \\
\midrule
$\alpha_{LR}$ &1.001 &1.0 &1.001 &1.0 &1.0 &1.0 &1.0 &1.0 &1.0 &1.0
  \\
$\alpha_{\scriptscriptstyle PAP}$ &1.021 &1.019 &1.015 &1.012 &1.01 &1.01 &1.008 &1.007 &1.007 &1.007  \\
\bottomrule
\end{tabular}
}
\end{table}
\begin{table}[H]
\centering
\captionsetup{justification=raggedright,singlelinecheck=false,width=\textwidth}
\caption{Comparing aggregation methods on uniform domain, $n=256$}\label{tab:256}
\small
\resizebox{\textwidth}{!}{%
\begin{tabular}{lllllllllll}
\toprule
$m$ & 1000 & 2000 & 3000 & 4000 & 5000 & 6000 & 7000 & 8000 & 9000 & 10000 \\
\midrule
Optimal aggregation &21364196 &42978692 &64688638 &86385780 &108105668 &129941788 &151610022 &173416762 &195182668 &216953088
 \\
LR-Aggregation &21378184 &42999152 &64729636 &86430458 &108147370 &129995492 &151659670 &173442520 &195241170 &217011630  \\
Pick-A-Perm &21813002 &43692424 &65616138 &87341642 &109182398 &131068444 &152877270 &174691440 &196543268 &218297896  \\
\midrule
$\alpha_{LR}$ &1.001 &1.0 &1.001 &1.001 &1.0 &1.0 &1.0 &1.0 &1.0 &1.0  \\
$\alpha_{\scriptscriptstyle PAP}$ &1.021 &1.017 &1.014 &1.011 &1.01 &1.009 &1.008 &1.007 &1.007 &1.006  \\
\bottomrule
\end{tabular}
}
\end{table}
\begin{table}[H]
\centering
\captionsetup{justification=raggedright,singlelinecheck=false,position=top}
\caption{Comparing aggregation methods on uniform domain, n=512}\label{tab:512}
\begin{adjustbox}{width=1\textwidth}
\small
\begin{tabular}{lllllllllll}
\toprule
$m$ & 1000 & 2000 & 3000 & 4000 & 5000 & 6000 & 7000 & 8000 & 9000 & 10000 \\
\midrule
Optimal aggregation &85382782 &171912688 &258868998 &345654312 &432565670 &519511610 &606715014 &693578060 &780831062 &867839784
 \\
LR-Aggregation &85460806 &172007138 &258986984 &345783456 &432720334 &519658944 &606887614 &693811870 &781058994 &868051882  \\
Pick-A-Perm &87357914 &174585464 &261964586 &349278268 &436861626 &524174852 &611797790 &698863312 &786094312 &873292744  \\
\midrule
$\alpha_{LR}$ &1.001 &1.001 &1.0 &1.0 &1.0 &1.0 &1.0 &1.0 &1.0 &1.0  \\
$\alpha_{\scriptscriptstyle PAP}$ &1.023 &1.016 &1.012 &1.01 &1.01 &1.009 &1.008 &1.008 &1.007 &1.00  \\
\bottomrule
\end{tabular}
\end{adjustbox}
\end{table}

Although the outputs of LR-Aggregation are superior to Pick-A-Perm, the difference is not significant. 
Intuitively, the reason is that when the domain of rankings is uniformly random, 
\textit{any} ranking would be a good approximation to the optimal aggregation ranking.
Hence the significance of the other two rank aggregation domains: the \textit{biased}, and the Mallows models.
In both the biased and Mallows domains, we also compared the results against the Best-Input-Ranking
algorithm. Note that the running
time of this algorithm is high and hence it is not suitable for the streaming setting; we use its results to show that even
if the random choice in Pick-A-Perm would magically coincide with the best input ranking, the result would still be
significantly inferior to our algorithm’s output.

In the biased  domain, we selected an arbitrary ranking as the base, and aligned  the order of $k$ random pairs (for different values of $k$) 
with the base ranking for the entire set of [randomly selected] rankings in the domain. 
Here, the intuition is that the base ranking plays the role of the optimal aggregated ranking, 
towards which each ranking in the domain is biased. The experimental outcomes for this domain are detailed in \crefrange{tabb:64}{tabb:512}.
\begin{table}[H]
\centering
\captionsetup{justification=raggedright,singlelinecheck=false,position=top}
\caption{Comparing aggregation methods on biased  domain, n=64}\label{tabb:64}
\begin{adjustbox}{width=1\textwidth}
\small
\begin{tabular}{lllllllllll}
\toprule
$m$ & 1000 & 2000 & 3000 & 4000 & 5000 & 6000 & 7000 & 8000 & 9000 & 10000 \\
\midrule
Optimal aggregation &743036 &1487292 &2239252 &2978586 &3725904 &4459232 &5217110 &5958516 &6689550 &7450008
 \\
LR-Aggregation &743036 &1487292 &2239252 &2978586 &3725904 &4459232 &5217110 &5958516 &6689550 &7450008  \\
Pick-A-Perm &971840 &1875682 &2965896 &3689616 &5349964 &5738160 &6723106 &7654212 &8605574 &9828204  \\
Best-Input-Ranking &855146 &1735202 &2604608 &3408994 &4299444 &5149104 &6040246 &6804482 &7715724 &8416922  \\
\midrule
$\alpha_{LR}$ &1.0 &1.0 &1.0 &1.0 &1.0 &1.0 &1.0 &1.0 &1.0 &1.0  \\
$\alpha_{\scriptscriptstyle PAP}$ &1.308 &1.261 &1.325 &1.239 &1.436 &1.287 &1.289 &1.285 &1.286 &1.319  \\
$\alpha_{BIR}$ &1.151 &1.167 &1.163 &1.145 &1.154 &1.155 &1.158 &1.142 &1.153 &1.13  \\
\bottomrule
\end{tabular}
\end{adjustbox}
\end{table}
\begin{table}[H]
\centering
\captionsetup{justification=raggedright,singlelinecheck=false,position=top}
\caption{Comparing aggregation methods on biased domain, n=128}\label{tabb:128}
\begin{adjustbox}{width=1\textwidth}
\small
\begin{tabular}{lllllllllll}
\toprule
$m$ & 1000 & 2000 & 3000 & 4000 & 5000 & 6000 & 7000 & 8000 & 9000 & 10000 \\
\midrule
Optimal aggregation &3033284 &6071348 &9097872 &12133560 &15175532 &18175410 &21264862 &24263942 &27292084 &30319972
 \\
LR-Aggregation &3033300 &6071348 &9097872 &12133560 &15175554 &18175410 &21264862 &24263942 &27292132 &30319972  \\
Pick-A-Perm &3938578 &8339116 &11739170 &15865000 &19169840 &23574446 &28333590 &31146058 &37126788 &39139142  \\
Best-Input-Ranking &3682784 &7304282 &10805192 &14401996 &18053338 &21512886 &25129174 &28536526 &32049500 &36111514  \\
\midrule
$\alpha_{LR}$ &1.0 &1.0 &1.0 &1.0 &1.0 &1.0 &1.0 &1.0 &1.0 &1.0  \\
$\alpha_{\scriptscriptstyle PAP}$ &1.298 &1.374 &1.29 &1.308 &1.263 &1.297 &1.332 &1.284 &1.36 &1.291  \\
$\alpha_{BIR}$ &1.214 &1.203 &1.188 &1.187 &1.19 &1.184 &1.182 &1.176 &1.174 &1.191  \\
\bottomrule
\end{tabular}
\end{adjustbox}
\end{table}
\begin{table}[H]
\centering
\captionsetup{justification=raggedright,singlelinecheck=false,position=top}
\caption{Comparing aggregation methods on biased domain, n=256}\label{tabb:256}
\begin{adjustbox}{width=1\textwidth}
\small
\begin{tabular}{lllllllllll}
\toprule
$m$ & 1000 & 2000 & 3000 & 4000 & 5000 & 6000 & 7000 & 8000 & 9000 & 10000 \\
\midrule
Optimal aggregation &12253932 &24452568 &36743086 &48860942 &61107774 &73281364 &85569474 &97790564 &110070702 &122391202
 \\
LR-Aggregation &12253948 &24452696 &36743176 &48861046 &61107800 &73281368 &85569474 &97790564 &110070740 &122391202  \\
Pick-A-Perm &15947806 &31613332 &47726560 &64908026 &79571614 &97173622 &112373732 &132202652 &143608576 &157744964  \\

Best-Input-Ranking &15083756 &30132186 &44134544 &59912854 &75280060 &89402594 &103568134 &119194542 &133732808 &148359130\\
\midrule
$\alpha_{LR}$ &1.0 &1.0 &1.0 &1.0 &1.0 &1.0 &1.0 &1.0 &1.0 &1.0  \\
$\alpha_{\scriptscriptstyle PAP}$ &1.301 &1.293 &1.299 &1.328 &1.302 &1.326 &1.313 &1.352 &1.305 &1.289  \\
$\alpha_{BIR}$&1.231 &1.232 &1.201 &1.226 &1.232 &1.22 &1.21 &1.219 &1.215 &1.212\\
\bottomrule
\end{tabular}
\end{adjustbox}
\end{table}

\begin{table}[H]
\centering
\captionsetup{justification=raggedright,singlelinecheck=false,position=top}
\caption{Comparing aggregation methods on biased domain, n=512}\label{tabb:512}
\begin{adjustbox}{width=1\textwidth}
\small
\begin{tabular}{lllllllllll}
\toprule
$m$ & 1000 & 2000 & 3000 & 4000 & 5000 & 6000 & 7000 & 8000 & 9000 & 10000 \\
\midrule
Optimal aggregation &49143766 &98275564 &147516688 &196647092 &245773046 &294961978 &344362278 &393714422 &442725918 &491860308
 \\
LR-Aggregation &49143936 &98275676 &147516738 &196647272 &245773160 &294962080 &344362376 &393714566 &442726004 &491860504  \\
Pick-A-Perm &64890874 &128957704 &198219264 &260797558 &322954020 &382025586 &458634100 &515974814 &601390482 &646653372 \\
Best-Input-Ranking &61817908 &121979318 &183983100 &245768570 &307556470 &366757806 &429832440 &488255222 &544648576 &609560198\\
\midrule
$\alpha_{LR}$ &1.0 &1.0 &1.0 &1.0 &1.0 &1.0 &1.0 &1.0 &1.0 &1.0  \\
$\alpha_{\scriptscriptstyle PAP}$ &1.32 &1.312 &1.344 &1.326 &1.314 &1.295 &1.332 &1.311 &1.358 &1.315   \\
$\alpha_{BIR}$&1.258 &1.241 &1.247 &1.25 &1.251 &1.243 &1.248 &1.24 &1.23 &1.239\\
\bottomrule
\end{tabular}
\end{adjustbox}
\end{table}
\crefrange{tabbb:64}{tabbb:512} shows the results of evaluation using the  Mallows model, 
a probabilistic framework that produces rankings centered around a modal (reference) ranking $\pi$~\cite{mallows}.
\vspace{-1pt}
As can be seen from these tables, the LR-Aggregation algorithm performs significantly better than both Pick-A-Perm, 
and the Best-Input-Ranking algorithm.
\begin{table}[H]
\centering
\captionsetup{justification=raggedright,singlelinecheck=false,position=top}
\caption{Comparing aggregation methods on Mallows domain, n=64}\label{tabbb:64}
\begin{adjustbox}{width=1\textwidth}
\small
\begin{tabular}{lllllllllll}
\toprule
$m$ & 1000 & 2000 & 3000 & 4000 & 5000 & 6000 & 7000 & 8000 & 9000 & 10000 \\
\midrule
Optimal aggregation &1058248 &2125736 &3196684 &4250218 &5295542 &6373098 &7421398 &8505328 &9525466 &10657240

 \\
LR-Aggregation  &1064202 &2135624 &3201178 &4266880 &5310938 &6390870 &7443418 &8523896 &9535502 &10688846\\
Pick-A-Perm &1183684 &2471510 &3746628 &4730992 &5852518 &6754998 &8163966 &9211450 &10000384 &11913640 \\
Best-Input-Ranking  &1095172 &2196802 &3283122 &4391372 &5466274 &6570268 &7648926 &8747674 &9788980 &10936558 \\
\midrule
$\alpha_{LR}$ &1.006 &1.005 &1.001 &1.004 &1.003 &1.003 &1.003 &1.002 &1.001 &1.003\\
$\alpha_{\scriptscriptstyle PAP}$ &1.119 &1.163 &1.172 &1.113 &1.105 &1.06 &1.1 &1.083 &1.05 &1.118 \\
$\alpha_{BIR}$ &1.035 &1.033 &1.027 &1.033 &1.032 &1.031 &1.031 &1.028 &1.028 &1.026  \\
\bottomrule
\end{tabular}
\end{adjustbox}
\end{table}

\begin{table}[H]
\centering
\captionsetup{justification=raggedright,singlelinecheck=false,position=top}
\caption{Comparing aggregation methods on Mallows  domain, n=128}\label{tabbb:128}
\begin{adjustbox}{width=1\textwidth}
\small
\begin{tabular}{lllllllllll}
\toprule
$m$ & 1000 & 2000 & 3000 & 4000 & 5000 & 6000 & 7000 & 8000 & 9000 & 10000 \\
\midrule
Optimal aggregation &4308840 &8645888 &12888628 &17249580 &21543302 &25869410 &30201792 &34431772 &38842100 &43157880 
 \\
LR-Aggregation &4332376 &8685294 &12947578 &17308464 &21632840 &25940466 &30343624 &34558106 &38973366 &43298942 \\
Pick-A-Perm &4730716 &9285498 &14401970 &19099006 &24939830 &27563490 &36495108 &38334302 &42779086 &46707656  \\
Best-Input-Ranking &4473392 &8957822 &13363412 &17811722 &22340052 &26817254 &31251154 &35591058 &40159234 &44673870  \\
\midrule
$\alpha_{LR}$ &1.005 &1.005 &1.005 &1.003 &1.004 &1.003 &1.005 &1.004 &1.003 &1.003 \\
$\alpha_{\scriptscriptstyle PAP}$ &1.098 &1.074 &1.117 &1.107 &1.158 &1.065 &1.208 &1.113 &1.101 &1.082 \\
$\alpha_{BIR}$ &1.038 &1.036 &1.037 &1.033 &1.037 &1.037 &1.035 &1.034 &1.034 &1.035  \\
\bottomrule
\end{tabular}
\end{adjustbox}
\end{table}

\begin{table}[H]
\centering
\captionsetup{justification=raggedright,singlelinecheck=false,position=top}
\caption{Comparing aggregation methods on Mallows  domain, n=256}\label{tabbb:256}
\begin{adjustbox}{width=1\textwidth}
\small
\begin{tabular}{lllllllllll}
\toprule
$m$ & 1000 & 2000 & 3000 & 4000 & 5000 & 6000 & 7000 & 8000 & 9000 & 10000 \\
\midrule
Optimal aggregation &17081480 &34277148 &51629010 &68705376 &86191660 &103090112 &120651510 &138143926 &155168188 &172153242
 \\
LR-Aggregation &17136012 &34415204 &51831918 &68978534 &86567716 &103322648 &121041802 &138503272 &155610616 &172576514  \\
Pick-A-Perm  &19540024 &37020352 &55648176 &76948016 &92154440 &108931520 &129229254 &151240918 &164902270 &202530498\\
Best-Input-Ranking  &17859906 &35790424 &53992090 &71556914 &89946686 &107943338 &125938368 &144253548 &162233692 &179590808\\
\midrule
$\alpha_{LR}$ &1.003 &1.004 &1.004 &1.004 &1.004 &1.002 &1.003 &1.003 &1.003 &1.002
 \\
$\alpha_{\scriptscriptstyle PAP}$ &1.144 &1.08 &1.078 &1.12 &1.069 &1.057 &1.071 &1.095 &1.063 &1.176
 \\
$\alpha_{BIR}$ &1.046 &1.044 &1.046 &1.042 &1.044 &1.047 &1.044 &1.044 &1.046 &1.043 \\
\bottomrule
\end{tabular}
\end{adjustbox}
\end{table}

\begin{table}[H]
\centering
\captionsetup{justification=raggedright,singlelinecheck=false,position=top}
\caption{Comparing aggregation methods on Mallows domain, n=512}\label{tabbb:512}
\begin{adjustbox}{width=1\textwidth}
\small
\begin{tabular}{lllllllllll}
\toprule
$m$ & 1000 & 2000 & 3000 & 4000 & 5000 & 6000 & 7000 & 8000 & 9000 & 10000 \\
\midrule
Optimal aggregation &68487690 &137679962 &206444602 &275704358 &344698154 &620711396 &482384060 &551471928 &621579330 &689884924
 \\
LR-Aggregation &68763394 &138294264 &207253302 &277292366 &346522820 &624068186 &485342600 &554272940 &624941784 &694008872    \\
Pick-A-Perm  &73427898 &146446548 &222891236 &298778334 &370935304 &654002326 &540112500 &596962984 &677280708 &733920194   \\
Best-Input-Ranking  &72145306 &144572252 &216688240 &288711516 &361367662 &648990014 &505491488 &577932814 &651220456 &723958350  \\
\midrule
$\alpha_{LR}$ &1.004 &1.004 &1.004 &1.006 &1.005 &1.005 &1.006 &1.005 &1.005 &1.006
 \\
$\alpha_{\scriptscriptstyle PAP}$ &1.072 &1.064 &1.08 &1.084 &1.076 &1.054 &1.12 &1.082 &1.09 &1.064
 \\
$\alpha_{BIR}$ &1.053 &1.05 &1.05 &1.047 &1.048 &1.046 &1.048 &1.048 &1.048 &1.049\\
\bottomrule
\end{tabular}
\end{adjustbox}
\end{table}

\subsection{Experiment on a Real Dataset}
This section presents the results of applying the LR-Aggregation algorithm to a real-world dataset, demonstrating  its practical effectiveness.  The dataset includes the   academic performance of 32 students across 10 distinct lessons, with each student  assigned a corresponding grade. The  ordering of students based on their grades in each lesson yields an individual ranking for that lesson. These rankings, denoted as $\pi_1,\ldots,\pi_{10}$, are listed in the following.
\begin{align*}
&\pi_1=\prec k, s, t, C, Q, p, c, i, S, A, l, H, J, h, R, M, U, B, D, L, I, F, n, P, j, G, T, o, m, u, b, r \succ\\
&\pi_2=\prec G, D, U, M, J, b, F, k, R, l, c, A, I, S, C, p, t, o, Q, H, u, s, i, r, T, P, n, h, L, m, j, B \succ\\
&\pi_3=\prec c, l, H, C, k, M, S, p, n, J, s, P, G, o, B, u, t, h, U, L, I, F, Q, m, A, j, T, i, R, D, b, r \succ\\
&\pi_4=\prec H, C, R, k, M, J, s, G, U, b, l, n, j, c, p, P, u, t, D, F, Q, A, o, L, S, h, m, i, B, I, T, r \succ\\
&\pi_5=\prec C, H, p, R, k, A, j, u, c, Q, S, J, U, h, l, r, n, o, D, T, P, b, t, i, M, B, s, G, F, I, m, L \succ\\
&\pi_6=\prec l, S, j, c, Q, U, L, R, A, F, n, B, p, P, k, o, H, C, M, J, h, s, D, m, t, i, I, b, T, r, G, u \succ\\
&\pi_7=\prec c, H, l, n, k, Q, h, p, A, P, R, s, b, C, M, r, u, S, j, B, J, U, o, I, m, D, i, G, T, F, t, L \succ\\
&\pi_8=\prec c, l, H, C, R, n, J, h, G, b, F, D, m, Q, M, S, j, P, I, L, p, r, A, s, t, o, k, B, T, i, U, u \succ\\
&\pi_9=\prec H, M, b, G, n, c, R, k, P, l, C, Q, S, h, A, s, p, t, u, B, I, U, F, j, L, r, J, i, T, o, m, D \succ\\
&\pi_{10}=\prec c, l, H, R, k, n, J, j, M, h, b, p, C, G, t, S, i, Q, L, T, o, F, A, s, P, U, D, r, B, u, I, m \succ\\
\end{align*}
The details of the data, including student IDs and grades for each lesson, are provided in the Appendix  to support a thorough analysis of the experimental results. 

Our experiments involve  three  aggregation methods: Spearman footrule aggregation,  LR-Aggregation, and average aggregation (sorting by their GPAs). 

 \cref{tab:evaluation_results}  presents the rankings obtained by these methods.
As the experimental results indicate, there are instances where students with higher GPAs (average aggregation)  are positioned lower than individuals with lower averages in the rankings based on Spearman aggregation and  LR-Aggregation.  For example, student $F$ with a  greater GPA than student $t$, is ranked lower than $t$ in
 the ordering obtained from two other aggregations. This occurs because $t$ has outperformed $F$ in more courses. A similar pattern is observed for students $n$ and $Q$. 
 
These findings highlight the nuances of different aggregation methods and their implications on ranking students based on various criteria. 

\begin{table}[H]
\centering
\captionsetup[table]{justification=raggedright,singlelinecheck=false,position=top}
\caption{Results of different aggregation methods for a real dataset}\label{tab:evaluation_results}
\begin{adjustbox}{width=0.8\textwidth}
\small
\begin{tabular}{lllllllllll}
\toprule
\textbf{Aggregation Method} & \textbf{Aggregated ranking} & \textbf{Footrule distance} \\
\midrule
\textbf{Spearman Aggregation} &  clHCkMRnJpbQShAGPtjUsFoILDiBTumr &1862 \\
\textbf{LR-Aggregation} & cHlCkRMpnJbQGShAPtUjsFoDLIBiTumr &1862 \\
\textbf{Average Aggregation} & clHCRkQMSpnJAhjsPGUbFtoDLIBiuTmr&1924 \\
\bottomrule
\end{tabular}
\end{adjustbox}
\end{table}
\section{Discussion and Conclusion}\label{sec:conclusion}
The Dynamic Rank Aggregation algorithm devised in this article is built on top of several ideas. 
It utilizes a novel look on the Spearman footrule distance, which forms the foundation for the LR-Aggregation algorithm. 
The algorithm, in turn, utilizes the LR-tree data structure, which allows the efficient calculation of the number of occurrences of 
any element in a subinterval. 

Another important idea is that the well-known Pick-A-Perm algorithm can be adapted to the dynamic setting. 
This required showing the approximation guarantee of 2, plus the observation that
the algorithm could be efficiently implemented in the dynamic setting through reservoir sampling. 

Experimental evaluations show that, not only the final Dynamic Rank Aggregation algorithm (\cref{alg:best_of}) is very efficient in practice, 
but also crucially it is much more efficient than Pick-A-Perm alone. 
This means combining the two algorithms, which we show can be done efficiently in a dynamic setting,
is crucial for getting excellent practical efficiency, on top of the theoretical [expected] approximation guarantee
of two by Pick-A-Perm. 
The algorithm could also be utilized in the static setting, where it has a running time of $O(mn\log n)$ on 
a domain of $m$ rankings. 
Also the memory used by the algorithm is $O(n^2)$, making it useful in streaming settings, where storing all input rankings is impractical.
Although the primary measure used in this work is the Spearman footrule distance, future research could explore
alternative performance metrics such as Kendall-tau or machine learning-based distance measures. These metrics may
offer deeper insights into ranking discrepancies and could reveal cases where the LR-Aggregation algorithm performs
even more effectively.

Beyond handling  incoming rankings, the proposed LR-Aggregation algorithm is also capable of managing incremental updates to previous rankings. This includes scenarios where earlier rankings are modified, such as in user feedback systems or evolving preferences. The underlying LR-tree structure supports efficient local adjustments, enabling the aggregated ranking to reflect such changes without full recomputation. While this flexibility expands the applicability of the algorithm to more complex dynamic environments, additional work is needed to fully optimize and formalize this capability. 
We identify this as an important direction for future research, aimed at further strengthening the adaptability and efficiency of our dynamic rank aggregation framework.

This paper has started a new research area on the classic well-known rank aggregation problem focusing on new  applications. Key directions for future work include:
\begin{itemize}
    \item Reducing the storage requirements of the LR-tree data structure while maintaining its efficient update capabilities.   
    \item Investigating  dynamic rank aggregation involving partial rankings.
    \item Explore incremental rank aggregation, particularly in scenarios where previous rankings themselves are modified. While our algorithm can handle new rankings incrementally, updating the aggregated ranking when previous rankings are altered is an important extension that requires further research.  
    \item More theoretical investigation of the LR-Aggregation algorithm, in particular proving a good approximation factor.
    \item Generalizing the approach to other distance measures, beyond Spearman's footrule, such as Kendall-tau distance.
    \item Applying and adapting the proposed method to real-world applications such as recommendation systems or distributed search results.    
\end{itemize}


\printbibliography
\section{Appendix}
\begin{table}[H]
    \centering
    \caption{Grades of students in 10 lessons}
    \label{tab:mytable}
    \begin{tabular}{ccccccccccc}
        \hline
        \textbf{} & $1$ & $2$ & $3$& $4$  & $5$ & $6$ & $7$ & $8$ & $9$ &$10$ \\
        \hline
        \textbf{A} & 19.20 & 17.80 & 18.50 & 17.00 & 18.00 & 18.00 & 18.60 & 17.00 & 17.40 & 16.40 \\
        \textbf{B} & 17.30 & 10.20 & 20.00 & 14.00 & 14.50 & 17.40 & 16.60 & 14.00 & 16.20 & 13.60 \\
        \textbf{b} & 10.50 & 19.00 & 15.90 & 20.00 & 15.00 & 13.20 & 17.70 & 20.00 & 19.60 & 18.80 \\
        \textbf{C} & 19.70 & 17.50 & 20.00 & 20.00 & 20.00 & 15.50 & 17.50 & 20.00 & 18.50 & 18.40 \\
        \textbf{c} & 19.40 & 17.80 & 20.00 & 18.50 & 17.50 & 19.80 & 20.00 & 20.00 & 18.90 & 20.00 \\
        \textbf{D} & 16.90 & 20.00 & 16.00 & 17.70 & 15.50 & 15.00 & 13.90 & 20.00 & 11.10 & 14.50 \\
        \textbf{F} & 16.70 & 19.00 & 18.80 & 17.50 & 13.50 & 18.00 & 13.00 & 20.00 & 15.40 & 16.80 \\
        \textbf{G} & 16.00 & 20.00 & 20.00 & 20.00 & 14.00 & 10.00 & 13.70 & 20.00 & 19.50 & 18.40 \\
        \textbf{H} & 18.60 & 16.40 & 20.00 & 20.00 & 19.50 & 16.00 & 20.00 & 20.00 & 20.00 & 20.00 \\
        \textbf{h} & 18.00 & 14.00 & 19.00 & 16.20 & 16.50 & 15.00 & 19.30 & 20.00 & 17.70 & 19.00 \\
        \textbf{I} & 16.90 & 17.70 & 19.00 & 12.80 & 13.50 & 13.50 & 16.00 & 19.00 & 16.10 & 12.00 \\
        \textbf{i} & 19.30 & 15.60 & 16.50 & 15.50 & 15.00 & 14.00 & 13.80 & 12.50 & 13.60 & 17.40 \\
        \textbf{J} & 18.60 & 19.40 & 20.00 & 20.00 & 17.50 & 15.00 & 16.10 & 20.00 & 14.20 & 19.20 \\
        \textbf{j} & 16.00 & 12.10 & 17.60 & 19.00 & 18.00 & 20.00 & 17.00 & 19.50 & 15.00 & 19.20 \\
        \textbf{k} & 20.00 & 18.70 & 20.00 & 20.00 & 18.50 & 16.30 & 19.50 & 14.00 & 18.80 & 20.00 \\
        \textbf{L} & 16.90 & 13.40 & 19.00 & 17.00 & 12.00 & 19.00 & 10.00 & 18.50 & 14.80 & 17.20 \\
        \textbf{l} & 18.90 & 18.10 & 20.00 & 19.50 & 16.00 & 20.00 & 19.70 & 20.00 & 18.50 & 20.00 \\
        \textbf{M} & 17.90 & 19.40 & 20.00 & 20.00 & 14.50 & 15.50 & 17.50 & 19.50 & 20.00 & 19.10 \\
        \textbf{m} & 12.70 & 12.30 & 18.60 & 16.00 & 13.00 & 14.50 & 15.70 & 20.00 & 12.20 & 10.00 \\
        \textbf{n} & 16.40 & 14.50 & 20.00 & 19.50 & 15.50 & 17.50 & 19.60 & 20.00 & 19.20 & 20.00 \\
        \textbf{o} & 13.70 & 17.10 & 20.00 & 17.00 & 15.50 & 16.20 & 16.00 & 16.00 & 13.00 & 17.10 \\
        \textbf{P} & 16.20 & 15.20 & 20.00 & 18.00 & 15.00 & 16.50 & 18.50 & 19.50 & 18.60 & 15.10 \\
        \textbf{p} & 19.50 & 17.40 & 20.00 & 18.10 & 19.00 & 16.60 & 18.70 & 18.00 & 17.30 & 18.50 \\
        \textbf{Q} & 19.60 & 16.90 & 18.60 & 17.20 & 17.50 & 19.50 & 19.50 & 19.50 & 18.50 & 17.20 \\
        \textbf{R} & 17.90 & 18.50 & 16.00 & 20.00 & 18.50 & 18.00 & 18.30 & 20.00 & 18.80 & 20.00 \\
        \textbf{r} & 10.00 & 15.60 & 9.00 & 10.00 & 16.00 & 12.00 & 17.20 & 18.00 & 14.70 & 14.50 \\
        \textbf{S} & 19.20 & 17.60 & 20.00 & 16.20 & 17.50 & 20.00 & 17.00 & 19.50 & 18.50 & 17.90 \\
        \textbf{s} & 20.00 & 15.70 & 20.00 & 20.00 & 14.00 & 15.00 & 18.30 & 17.00 & 17.40 & 15.30 \\
        \textbf{T} & 15.60 & 15.50 & 16.70 & 11.00 & 15.50 & 13.00 & 13.30 & 14.00 & 13.30 & 17.20 \\
        \textbf{t} & 20.00 & 17.30 & 19.10 & 17.90 & 15.00 & 14.00 & 10.00 & 16.00 & 17.10 & 18.10 \\
        \textbf{U} & 17.30 & 19.50 & 19.00 & 20.00 & 17.50 & 19.30 & 16.00 & 11.00 & 15.70 & 15.00 \\
        \textbf{u} & 10.70 & 16.00 & 20.00 & 18.00 & 17.60 & 10.00 & 17.10 & 10.00 & 16.90 & 13.00 \\
        \hline
    \end{tabular}
 
\end{table}
\end{document}